\tikzset{
decoration={markings,mark=at position 0.67 with {\arrow[thick,color=gray]{latex}}}
}
\DeclareOldFontCommand{\bf}{\normalfont\bfseries}{\mathbf}
\DeclareOldFontCommand{\it}{\normalfont\itshape}{\textit}
\newcommand{\mc}{\mathcal}
\newcommand{\mb}{\mathbb}
\newtheorem{thm}{Theorem}[section]
\newtheorem{lem}[thm]{Lemma}
\newcommand{\RR}{\mathbb R}
\newcommand{\BIIa}{\mathcal{B}_{N_1}}
\newcommand{\BIIb}{\mathcal{B}_{N_2}}
\newcommand{\BIIc}{\mathcal{B}_{N_3}}
\newcommand{\KC}{\mathrm{K}^{\ocircle}}
\newcommand{\KM}{\mathcal{K}}
\renewcommand{\S}{\cal S}
\newcommand{\Q}{{\bf \mathrm{Q}}}
\newcommand{\T}{{ \mathrm{T}}}
\begin{document}

\title{{\LARGE{Chaos in spatially homogeneous Ho\v{r}ava-Lifshitz subcritical cosmologies}}}

\author{
 \\
{~}\\
Phillipo Lappicy* and Victor H. Daniel*\\
\vspace{2cm}}

\date{ }
\maketitle
\thispagestyle{empty}

\vfill

$\ast$\\
ICMC, Universidade de S\~ao Paulo\\
Av. trabalhador são-carlense 400, 13566-590, São Carlos, SP, Brazil\\

%%%%%%%%%%%%%%%%%%%%%%%%%%%%%%%%%%%%%%%%%%%%%%%%%%%%%%%%%%%

\newpage
\pagestyle{plain}
\pagenumbering{arabic}
\setcounter{page}{1}

\begin{abstract}
\noindent We consider spatially homogeneous models in Ho\v{r}ava-Lifshitz (HL) gravity that perturbs General Relativity (GR) by a parameter $v\in (0,1)$ such that GR occurs at $v=1/2$. We prove that the induced Kasner map is chaotic for a broad class of modified HL gravity models, when $v\in (0,1/2)$, despite the fact that the Kasner map is multi-valued in such subcritical regime. 

\ 

\noindent \textbf{Keywords:} Ho\v{r}ava-Lifshitz cosmology, spatially homogeneous models, chaos.

%\textbf{Journals to submit to:} Classical and Quantum Gravity, European Physical Journal C, Journal of Cosmology and Astroparticle Physics, Physical Review D, International Journal of Modern Physics D, General Relativity and Gravitation, Journal of Mathematical Physics, Ergodic Theory and Dynamical Systems,
\end{abstract}

\section{Introduction}\label{sec:intro}

\numberwithin{equation}{section}
\numberwithin{figure}{section}
\numberwithin{table}{section}

Ho{\v{r}}ava proposed a renormalizable, higher order derivative gravity theory that recovers general relativity (GR) in low energy but with improved high-energy behaviors, see \cite{hor09a,hor09b, HL_status_report}. This approach violates full spacetime diffeomorphism and introduce anisotropic scalings of space and time.
The deformation of the kinetics was firstly considered by DeWitt in \cite{DeWitt67}, whereas higher order derivatives in the potential was originally suggested by Lifshitz in \cite{Lifshitz41}.

More precisely, Ho{\v{r}}ava gravity is a gauge theory formulated in terms of a lapse $N$ and a shift vector $N^i$, which serve as Lagrange multipliers for the constraints in a Hamiltonian context, and a three-dimensional Riemannian metric $g_{ij}$ on the slices of the preferred foliation. We consider projectable theories, when the lapse depends only on time, $N\equiv N(t)$.
These objects arise from a 3+1 decomposition of a 4-metric according to,
\begin{equation}\label{genmetric}
%{}^4
\mathbf{g} = -N^2dt\otimes
dt + g_{ij}(dx^i + N^idt)\otimes (dx^j + N^jdt).
\end{equation}
In suitable units and scalings, the dynamics of Ho{\v{r}}ava vacuum gravity is governed by the
action
\begin{subequations}\label{action}
\begin{equation}
S = \int N\sqrt{ \det g_{ij}}({\cal T} - {\cal V}) dtd^3x,
\end{equation}
where ${\cal T}$ and ${\cal V}$ are given by
\begin{align}
{\cal T} &= K_{ij}K^{ij} - \lambda (K^k\!_k)^2,\label{kin}\\
{\cal V} &= {}^1 {\cal V}+{}^2 {\cal V}+{}^3 {\cal V}+{}^4 {\cal V}+{}^5 {\cal V}+{}^6 {\cal V}+\dots ,\\
&= k_1 R + k_2 R^2 + k_3 R^i\!_jR^j\!_i + k_4 R^i\!_jC^j\!_i + k_5 C^i\!_jC^j\!_i + k_6 R^3+ \dots\, .\label{calV}
\end{align}
\end{subequations}
Here $K_{ij}$ is the extrinsic curvature, $R$ and $R_{ij}$ are the scalar curvature
and Ricci tensor (of the spatial metric $g_{ij}$), respectively,
while $C_{ij}$ is the Cotton-York tensor~\cite{hor09b}, while the constants $\lambda, k_1,\dots ,k_6$ are real parameters. 
Each potential term ${}^i {\cal V}$, where $i=1,\dots,6$, is defined as the $i^{th}$ summand in \eqref{calV}.
Repeated indices are summed over according to Einstein's
summation convention.

Full spacetime diffeomorphism invariance in GR fixes $\lambda=1$
uniquely and set all parameters of ${\cal V}$ in~\eqref{calV} to
zero, except $k_1=-1$ (i.e., ${\cal V} = -R$), see~\cite{hor09a,hor09b}.
Thus GR is a special case among the Ho{\v{r}}ava models.
The introduction of $\lambda$ changes the scaling properties of the field
equations, as does the introduction of additional curvature terms. Since some of
the curvature terms have different scaling properties, sums of such terms in ${\cal V}$
result in that the field equations no longer are scale-invariant.

The classical Belinski, Khalatnikov and Lifshitz (BKL) picture suggests that generic singularities in GR are: (i) \textit{vacuum dominated}, (ii) \textit{local} and (iii) \textit{oscillatory}.
In this regard, vacuum spatially homogeneous cosmologies, the Bianchi models, are expected to play a key role in the dynamical asymptotic behaviour, see \cite{mis69a,bkl70,bkl82,Hobill94,Mixmaster,ugg13a,ugg13b}.
\textcolor{black}{Similarly, the Bianchi models in Ho{\v{r}}ava gravity are also expected to describe generic singularities, see \cite{Bakas10,Miso11,GiKam17,HellLappicyUggla}. % in Ho{\v{r}}ava's theory.
%in order to discuss how generic singularities behave according to perturbations of GR.
In general, it is heuristically argued that there is an `asymptotically dominant' curvature term in \eqref{calV} toward the initial singularity, yielding a respective dominant Bianchi model, see \cite[Appendix A]{HellLappicyUggla}.}
%  
%Note that the BKL picture within Ho\v{r}ava gravity has attracted recent attention.}

\textcolor{black}{In Appendix \ref{app}, we deduce the} dominant Bianchi models in vacuum Ho{\v{r}}ava gravity:
\begin{subequations}\label{full:subs}
	\begin{align}
	\Sigma_+^\prime  &= 4v(1-\Sigma^2)\Sigma_+ + \, {\cal S}_+ \label{full:Sigma+}\\
	\Sigma_-^\prime   &= 4v(1-\Sigma^2)\Sigma_- + \, {\cal S}_- \label{full:Sigma-}\\
	N_1^\prime 			  &= -2(2v\Sigma^2 - 2\Sigma_+) \, N_1 \label{full:N1} \\
	N_2^\prime 			  &= -2(2v\Sigma^2 + \Sigma_+ + \sqrt{3}\Sigma_-)N_2 \label{full:N2} \\
	N_3^\prime 			  &= -2(2v\Sigma^2 + \Sigma_+ - \sqrt{3}\Sigma_- )N_3 \label{full:N3} 
	\end{align}
\end{subequations}
for some parameter $v \in [0,1]$, where the vector field is defined as follows
\begin{subequations}
	\begin{align}
	\Sigma^2  &	:= \Sigma_+^2 + \Sigma_-^2 ,\label{Sigma}\\
	\S_+ &  := 2 \left[ \left(N_3-N_2\right)^2 - N_1\left(2N_1-N_2-N_3\right) \right],\\
	\S_- &  := 2\sqrt{3} \left(N_3-N_2\right)\left(N_1-N_2-N_3\right).
	\end{align}
\end{subequations}
We denote ${}^\prime = d/d\tau$ the time derivative with respect to a time variable %which is directed towards the physical past. , 
such that the singularity occurs as $\tau \rightarrow \infty$.
The evolution equations \eqref{full:subs} are bound to the constraint:
\begin{equation} \label{constraint}
1 =  \Sigma^2 + N_1^2 + N_2^2 + N_3^2 - 2N_1N_2 - 2N_2N_3 - 2N_3N_1,
\end{equation}
which restricts the phase-space $\RR^5$ to a four-dimensional invariant submanifold.

%The system \eqref{full:subs}  is ($D_3 \times \ZZ_2$)-equivariant, where $D_3$ is the dihedral group of order six. %, the group of symmetries of the equilateral triangle. Indeed, a cyclic permutation of the indices of the $N$-variables corresponds to a rotation in the $(\Sigma_+,\Sigma_-)$-plane and a non-cyclic permutation corresponds to a reflection.

Equations \eqref{full:subs} describe vacuum spatially homogeneous models in GR when $v=1/2$, and thus $v\in (0,1)$ denotes perturbations of GR.
\textcolor{black}{Moreover, in \cite[Appendix A.2]{HellLappicyUggla}, they argue that the equations \eqref{full:subs} are expected to asymptotically describe the dynamics of each individual potentials ${}^i {\cal V}$, where $i=1, \dots ,6$, for the respective parameters,}
\begin{equation}\label{v's}
    {}^1v := \frac{1}{\sqrt{2(3\lambda-1)}}, \quad {}^{2}v = {}^{3}v = \frac{{}^1 v}{4}, \quad {}^4v = \frac{{}^1 v}{10}, \quad {}^{5}v = {}^{6}v = 0.
\end{equation}
For example, the exact evolution equations for vacuum spatially homogeneous $\lambda$-R models occur with parameter value ${}^1v$ defined in \eqref{v's}, whereas a Ho{\v{r}}ava model with quadratic (resp. cubic) potentials given by ${}^2 {\cal V}=k_2 R^2$ (resp. ${}^6 {\cal V}=k_6 R^3$) have dominant asymptotic evolution equation with parameter ${}^{2}v={}^1v/4$ (resp. ${}^{3}v=0$). 
\textcolor{black}{Instead of considering a particular ${}^iv$, we analyze all $v\in(0,1)$ which takes into account all possible dominant potentials in \eqref{calV}.}

The dynamics of the ODEs \eqref{full:subs} in GR, when $v=1/2$, has been extensively considered in the literature. A major achievement is the attractor theorem by H. Ringström, which states that the $\omega$-limit set of generic solutions of Bianchi type IX is contained in the space of solutions of Bianchi type I and II, see \cite{Ringstrom, heiugg09b}.
Therefore, it is expected that the concatenation of heteroclinic orbits of Bianchi type II and the induced map of Bianchi type I (the so-called \textit{Kasner map}) play a key role in the dynamics.
More rigorous results can be found, for example, in \cite{Beguin,Liebscher,Bernhard,Dutilleul}. A state~of~the~art overview is provided in \cite{Hobill94,Mixmaster}.

In the case that $v\neq 1/2$, some of the features of GR persist.
In particular, the Bianchi hierarchy of invariant sets remains valid depending on whether the variables $N_\alpha$ are nonzero. Bianchi type I consists of all $N_\alpha$ being zero, Bianchi type II consists of a single non-zero $N_\alpha$, Bianchi types $\mathrm{VI}_0,\mathrm{VII}_0$ consist of two non-zero $N_\alpha$, and Bianchi types $\mathrm{VIII},\mathrm{IX}$ consist of three non-zero $N_\alpha$.

Bianchi type I solutions are characterized by all $N_\alpha = 0,\alpha=1,2,3$.  The constraint \eqref{constraint} reduces the phase space to the  \textit{Kasner circle} of equilibria:
\begin{align} \label{defKC}
\KC := \left\{ \left(\Sigma_+,\Sigma_-, 0,0,0\right) \in \RR^5 \,\, |  \,\,
\Sigma_+^2 + \Sigma_-^2 = 1 \right\}.
\end{align}
There are three special points in $\KC$ corresponding to the Taub representation of Minkowski spacetime in GR. They are therefore called the \emph{Taub points} and given by
\begin{align}\label{Taubpoints}
\T_1 := \left(-1,0\right), \hspace{1.2cm}
\T_2 := \left(\dfrac{1}{2},\frac{\sqrt 3}{2}\right), \hspace{1.2cm} \T_3 := \left(\dfrac{1}{2},-\dfrac{\sqrt 3}{2}\right).
\end{align}
Note the existence of the Kasner circle is independent on the parameter $v\in [0,1]$. Its stability, however, depends strongly on the parameter and this affects the type II solutions, which will be described next. 

In general, linearization of equation~\eqref{full:subs} at $\mathrm{K}^{\ocircle}$ results in $N'_1 = 2(\Sigma_+|_{\mathrm{K}^{\ocircle}}-v){N}_1$, and thereby the stability behaviour of $N_{1}$ changes when $\Sigma_+|_{\mathrm{K}^{\ocircle}} = v$. We define the \emph{unstable Kasner arc}, denoted by $\mathrm{int}(A_{1})$, to be the points in $\mathrm{K}^{\ocircle}$ that are unstable in the $N_1$ variable, i.e., when $\Sigma_+ > v$. The closure of $\mathrm{int}(A_{1})$ is denoted by
$A_1$ and is given by
\begin{equation}\label{A_1}
A_1:= \left\{ (\Sigma_+,\Sigma_-,0,0,0)\in \mathrm{K}^{\ocircle} \text{ $|$ }
\Sigma_+ \geq v\right\}.
\end{equation}
Note that the arcs $A_1$ are symmetric portions of
$\mathrm{K}^{\ocircle}$ with points $\mathrm{Q}_1 := -\mathrm{T}_1$ in the middle.%, where $\Sigma_+$ attains its minimum value $-2$ in $\mathrm{K}^\ocircle$.
The boundary set $\partial A_1$ consists of two fixed points, which coincide with the Taub points $\mathrm{T}_2$ and $\mathrm{T}_3$ when $v=1/2$, but unfolds each Taub point into two non-hyperbolic fixed points when $v\neq 1/2$.
%Indeed, for $v=1/2$, the Taub point $\mathrm{T}_1$ has three center eigenvalues, whereas it has one center eigenvalue for $v\neq 1/2$ corresponding to the Kasner circle.  Each of the two unfoldings of the Taub points is a fixed point with two center directions.
Equivariance yields the arcs $A_2,A_3$, where the respective variables $N_2,N_3$ are unstable, and leads to similar statements for $\mathrm{T}_2$ and $\mathrm{T}_3$.
For $v<1/2$, we define 
\begin{equation}
    \mathbf{A_{\alpha\beta}}:= A_\alpha\cap A_\beta, \quad \text{ and } \quad \mathbf{A}:= \underset{\alpha\beta}{\bigcup} \, \mathbf{A_{\alpha\beta}},     
\end{equation}
where $(\alpha\beta\gamma)$ is a permutation of $(123)$.

Bianchi type II solutions consists of three disjoint hemispheres with a common boundary: the Kasner circle.
More precisely, it is the set of solutions where two $N$-variables are zero and one is nonzero, i.e. $\BIIa \cup \BIIb \cup \BIIc$, where  
\begin{align}\label{BII_N_1}
\BIIa :=\left\{ \left( \Sigma_+,\Sigma_-, N_1,0,0\right) \in \RR^5 \,\, \Big|  \begin{array}{c}
N_1 > 0  \\
{N_1}^2 = 1-\Sigma^2
\end{array} \right\},
\end{align}
and $\BIIb, \BIIc$ are obtained by equivariance with non-zero $N_2,N_3$, respectively.
Solutions of \eqref{full:subs} in the hemisphere $\BIIa$ are heteroclinic orbits between two Kasner equilibria with $\alpha$-limit sets in $\text{int}(A_1)$ and $\omega$-limit in the complement $A_1^c:=\mathrm{K}^{\ocircle} \backslash A_1$, as depicted in Figure \ref{KC_3d_map}. Similarly for $\BIIb, \BIIc$. For more details, see \cite{HellLappicyUggla}.
\begin{figure}[H]
	\centering
	%projected solutions
	\begin{tikzpicture}[scale=1.6]

 	%tangents
 	\draw[lightgray,dashed,-] (2.5,0) -- (0.3,0.39);
 	\draw[lightgray,dashed,-] (2.5,0) -- (0.3,-0.39);

 	%kasner circle
 	\draw [line width = 0.1mm] (-1,0) arc (180:360:1cm and 0.4cm);
 	\draw [line width = 0.1mm, loosely dashed] (1,0) arc (0:180:1cm and 0.4cm);
 	
 	\draw [-Bracket, line width = 0.4mm] (1,0) arc (0:-66.4218:1cm and 0.4cm);% node[anchor=north] {\tiny$\tab$};
 	\draw [-Bracket, line width = 0.4mm, loosely dashed] (1,0) arc (0:66.4218:1cm and 0.4cm);% node[anchor=south] {\tiny$\tac$};
 	\draw [black] (1,0) circle (0.01pt) node[anchor=west] {\footnotesize $A_1$};
  	\draw [black] (-1,0) circle (0.01pt) node[anchor=east] {\footnotesize $A_1^c$};

 	% caption K
 	\filldraw [black] (-0.7,-0.5) circle (0.1pt) node[anchor=west] {\footnotesize $\KC$};

 	%%%%% mitte
 	\draw [ -Stealth, line width = 0.3mm] (1,0) arc (0:179:1cm and 1.4cm);

   %%%%%% außen
   % vorne
	\draw [ -Stealth, line width = 0.3mm]  (0.86,-0.21) arc (22:174:0.64cm and 0.6cm);
   % hinten 	
	\draw [ -Stealth, line width = 0.3mm, dashed]  (0.86,0.21) arc (15:150:0.65cm and 0.7cm);

   %%%%%% mittlere
    %vorne   
  % \draw [-Stealth, line width = 0.3mm]  (0.96, -0.11) arc (10:175:0.88cm and 1.16cm);
    \draw [line width = 0.3mm]  (0.68, 0.64) arc (48:8:0.93cm and 1.23cm);
    \draw [-Stealth,line width = 0.3mm]  (0.68, 0.64) arc (50:177:0.93cm and 1.2cm);
    %hinten
    \draw [-Stealth, line width = 0.3mm, dashed]  (0.96, 0.11) arc (3:170:0.90cm and 1.1cm);
 	
 	%projected
 	\draw[dotted] (2.5,0) -- (-1,0);
 	\draw[dotted] (2.5,0) -- (-0.8,0.22);
 	\draw[dotted] (2.5,0) -- (-0.8,-0.22);
	\draw[dotted] (2.5,0) -- (-0.35,0.356);
 	\draw[dotted] (2.5,0) -- (-0.35,-0.356);

 	% coordinate
 	%\draw[-stealth] (0,0) -- (0,1.8) node [anchor=west] {$N_1$};

% 	%initial points
% 	%front
% 	\filldraw [black] (0.9,-0.17)circle (0.6pt);
% 	\draw (0.9,-0.17) circle (0.6pt);% node[anchor=east] {$\alpha(x_0)$};
% 	
% 	%middle
% 	\filldraw [black] (1,0) circle (0.6pt);
% 	\draw (1,0) circle (0.6pt);% node[anchor=east] {$\alpha(x_0)$};
% 
% 	%back
% 	\filldraw [black] (0.95,0.13) circle (0.6pt);
% 	\draw (0.95,0.13) circle (0.6pt);% node[anchor=east] {$\alpha(x_0)$};
% 	
% 	 \filldraw [black] (0.76,-0.26) circle (0.6pt);
% 	\draw (0.76,-0.26) circle (0.6pt);% node[anchor=east] {$\alpha(x_0)$};
% 	
% 	
% 	%final points
% 	%front
% 	\filldraw [black] (-0.6,-0.33) circle (0.6pt);
% 	\draw [black] (-0.6,-0.33) circle (0.6pt); % node[anchor=south west] {$\omega(x_0)$};
% 	
% 	%middle
% 	\filldraw [black] (-1,0) circle (0.6pt);
% 	%\draw (-1,0) circle (0.7pt);% node[anchor=east] {$\alpha(x_0)$};
% 	
% 	 %back
% 	\filldraw [black] (-0.75,0.26) circle (0.6pt);
% 	\draw [black] (-0.75,0.26) circle (0.6pt); % node[anchor=south west] {$\omega(x_0)$};
% 	
% 	\filldraw [black] (-0.15,-0.4) circle (0.6pt);
% 	\draw (-0.15,-0.4) circle (0.6pt);% node[anchor=east] {$\alpha(x_0)$};

 	%eta
 	\filldraw [black] (2.5,0) circle (0.7pt);
 	\draw [black] (2.5,0) circle (0.7pt) node[anchor=west] {$\frac{\Q_1}{v}$};
 	
 	%x_0
 %	\filldraw [black] (-0.3,0.59) circle (0.7pt) node[anchor=east] {$x_0$};

	\end{tikzpicture}
	\qquad\quad
	\begin{tikzpicture}[scale=1.35]

	%coordinate system
%	\draw [line width = 0.1mm] ( 0,-1.4) -- (0,1.4) ;
%	\draw [line width = 0.1mm]  (-1.4,0) -- (1.4,0) ;

    %auxiliary triangle	
	\draw[lightgray,dashed, -] (2.5,0) -- (0.41,-0.91);
	\draw[lightgray,dashed, -] (2.5,0) -- (0.41, 0.91);

	%circle
	\draw [very thin] (1,0) arc (0:360:1cm and 1cm);	 	
	
	% A_1
	\draw [thick,Bracket-Bracket] (0.4,-0.9165) arc (-66.4218:66.4218:1cm and 1cm);
	
	%T_1, Q_1
	\filldraw [black] (-1,0) circle (0.7pt) node[anchor=east] {\footnotesize $\T_1$};
	\filldraw [black] (1,0) circle (0.7pt) node[anchor=west] {\footnotesize $\Q_1$};
	
	%tangent points
	\filldraw [black] (0.4,0.9165) circle (0pt);% node[anchor=south] {$\tac$};
	\filldraw [black] (0.4,-0.9165) circle (0pt);% node[anchor=north] {$\tab$};

%	\draw[-Stealth, dotted,thick] (2.5,0) -- (-1,0); 
%	\draw[-Stealth, dotted,thick] (2.5,0) -- (-0.11,-0.98); 
%	\draw[-Stealth, dotted,thick] (2.5,0) -- (-0.54,-0.83); 
%	\draw[-Stealth, dotted,thick] (2.5,0) -- (-0.8,0.59); 
%	%
%	%
%	\draw[-Stealth, dotted,thick] (2.5,0) -- (-0.94,0.3); 
%	\draw[-Stealth, dotted,thick] (2.5,0) -- (-0.11,0.98); 
%	\draw[-Stealth, dotted,thick] (2.5,0) -- (-0.54,0.83); 
%	\draw[-Stealth, dotted,thick] (2.5,0) -- (-0.8,-0.59); 
%	\draw[-Stealth, dotted,thick] (2.5,0) -- (-0.94,-0.3); 

\foreach \s in {0.15,0.34,0.5,0.66,0.85}
{\draw[gray,dotted] (2.5,0) -- ({cos(42-\s*84)},{sin(42-\s*84)});}
%die äußeren und mitte funktionieren mit den s-werten
\draw[-Stealth,dotted,thick] ({cos(42-0.15*84)},{sin(42-0.15*84)}) -- ({cos(90+0.15*180)},{sin(90+0.15*180)});
\draw[-Stealth,dotted,thick] ({cos(42-0.5*84)},{sin(42-0.5*84)}) -- ({cos(90+0.5*180)},{sin(90+0.5*180)});
\draw[-Stealth,dotted,thick] ({cos(42-0.85*84)},{sin(42-0.85*84)}) -- ({cos(90+0.85*180)},{sin(90+0.85*180)});

%die funktionieren nicht
\draw[-Stealth,dotted,thick] ({cos(42-0.34*84)},{sin(42-0.34*84)}) -- ({cos(90+0.33*180)},{sin(90+0.33*180)});
\draw[-Stealth,dotted,thick] ({cos(42-0.66*84)},{sin(42-0.66*84)}) -- ({cos(90+0.67*180)},{sin(90+0.67*180)});
%	\draw[-Stealth, dotted,thick] (2.5,0) -- (-1,0); 
%	\draw[-Stealth, dotted,thick] (2.5,0) -- (-0.11,-0.98); 
%	\draw[-Stealth, dotted,thick] (2.5,0) -- (-0.54,-0.83);
%	
%	 
%	\draw[-Stealth, dotted,thick] (2.5,0) -- (-0.8,-0.59); 
%	\draw[-Stealth, dotted,thick] (2.5,0) -- (-0.8,-0.59); 
%
%	
	
	% 
	\filldraw [black] (-0.8,-0.7) circle (0pt) node[anchor=north] {\footnotesize $\KC$}; 	
	
	% -T/v
	\filldraw [black] (2.5,0) circle (0.7pt);
	\draw [black] (2.5,0) circle (0.7pt) node[anchor=west] {$\frac{\Q_1}{v}$};

	\end{tikzpicture}
	\caption {Left: Bianchi type II solutions are heteroclinic orbits in the hemisphere $\BIIa$ with $\alpha$-limits within int$(A_1)$ and $\omega$-limits in $A_1^c$. Right: Projection of the Bianchi type II solutions into the $\Sigma$-plane.}\label{KC_3d_map}
\end{figure}

Thus, the type II solutions in $\BIIa$ induce a map from the $\alpha$-limit set to the $\omega$-limit set of heteroclinics in $\BIIa$, denoted by $\KM_1:A_1 \longrightarrow \overline{A_1^c}$. In particular, we can parametrize $A_1$ by polar coordinates $(\Sigma_+,\Sigma_-)=(\cos(\varphi),\sin(\varphi))$, which implies that the angle variable $\varphi\in A_1=[-\arccos v, \arccos v]$.
%\textcolor{blue}{The proof for period 4 as it is now makes use of the domain for the parametrization being $[-\arccos(v), 2\pi - \arccos(v)]$ But maybe it's clearer to give those details later} \textcolor{black}{ok. let us leave the way things are, for now, and then when we finish the next section, we come back and see how we write the introduction. for example, maybe defining the maps on the whole arc $A_1$ might make the notation a bit simpler, allow tha kasner map $\KM$ below be well-defined at the taub points, and might not change the proof. we can discuss that later...}
In this coordinates, the map $\KM_1$ is given by
\begin{equation}\label{KM_angle}
\KM_1(\varphi) :=
\pi - 2 \arctan\left(\left(\frac{1+v}{1-v} \right) \tan\left(\frac{\varphi}{2}\right)\right),
\end{equation}
with derivative given by $|D\KM_1(\varphi)| = g(\varphi) := \frac{1-v^2}{ v^2 + 1 -2\cos(\varphi) v} \geq 1.$
%\begin{equation}\label{derivative}
%D\KM_1(\varphi) = -g_1\left(p\right) := -\frac{1-v^2}{ v^2 + 1 -2\cos(\varphi) v} < -1.
%\end{equation}
%
%Hence the expansion rate is maximal and equal to $(1+v)/(1-v)$ at the midpoint $\Q_1\in A_1$ and converges to $-1$ towards the boundary of $A_1$, i.e. as $\varphi \rightarrow \pm \arccos(v)$. % or equivalently $\Sigma_+ \rightarrow v$.
%If $v$ decreases, the domain $A_1$ gets bigger and the derivative flattens towards the constant value $-1$, whereas if $v$ increases, the domain $A_1$ shrinks and the expansion rate increases in $\text{int} (A_1)$. 
%
\begin{figure}[H]
\centering
    \begin{tikzpicture}[scale=0.9]
        \draw[->] (0,-0.1) -- (0,3.5)node[anchor=south] {\scriptsize{$g(p)$}};
        \draw[->] (-0.1,0) -- (6.28,0)node[anchor=west] {\scriptsize{$p\in \mathrm{A}_1 \subseteq \mathrm{K}^{\ocircle}$}};

        \draw[color=gray,dashed] (3.14,3) -- (0,3); \node at (-0.4,3) {\scriptsize{$\frac{1+v}{1-v}$}};
        \draw[color=gray,dashed] (3.14,3) -- (3.14,0) node[anchor=north] {\color{black} \scriptsize{$\mathrm{Q}_1$}};

        %\draw[-,dashed] (2.09,1) -- (0,1) node[anchor=east] {$1$};
        \draw[color=gray,dashed] (2.09,1) -- (2.09,0);% node[anchor=north] {\color{black} $\mathrm{t}_{32}$};

        \draw[color=gray,dashed] (4.18,1) -- (0,1) node[anchor=east] {\color{black}\scriptsize{ $1$}};
        \draw[color=gray,dashed] (4.18,1) -- (4.18,0);% node[anchor=north] {\color{black} $\mathrm{t}_{23}$};

        \draw[-] (2.09,0) -- (4.18,0);

        \draw [domain=2.09:4.18,variable=\t,smooth] plot ({\t},{(1-(1/2)^2)/(1+(1/2)^2+2*(1/2)*cos(\t r))});
        
        \draw[ultra thick] (2.09,0) -- (2.49,0) node[anchor=south] {\color{black} \scriptsize{$\mathbf{A}_{12}$}};
        \draw[ultra thick] (4.18,0) -- (3.78,0) node[anchor=south] {\color{black} \scriptsize{$\mathbf{A}_{13}$}};
        
        \draw[color=gray,dashed] (3.78,1.65) -- (0,1.65) node[anchor=east] {\color{black} \scriptsize{$\KM (\mathrm{t}_*)$}};
        \draw[color=gray,dashed] (2.49,1.65) -- (2.49,0) node[anchor=north] {\color{black} \scriptsize{$\mathrm{t}_{*}$}};
        \draw[color=gray,dashed] (3.78,1.65) -- (3.78,0);% node[anchor=north] {\color{black} $\mathrm{t}_{23}$};        
    \end{tikzpicture}
\caption{The function $g(p)$ for $p\in A_1$. % between the tangential points $\mathrm{t}_{32}$ and $\mathrm{t}_{23}$. 
For $v\in (0,1/2)$, note that any open set $U\subseteq A_1\setminus (\mathbf{A}_{12}\cup\mathbf{A}_{13})$ satisfies $|\KM_1 (U) |\geq |D\KM_1 (\mathrm{t}_*)|\, |U|$ such that $|D\KM (\mathrm{t}_*)|>1$ and $\mathrm{t}_*$ is the unique point in $\partial \mathbf{A}_{12} \cap \mathrm{int}(A_1)$. }\label{fig:plotg}
\end{figure}
Analogous constructions in $\BIIb$ and $\BIIc$ yield maps $\KM_2$ and $\KM_3$,
which altogether
%Therefore, the type II solutions on the three hemispheres $\BIIa,\BIIb,\BIIc$ 
induce a discrete map of the circle %: it maps the $\alpha$-limit set of any type II heteroclinic to the $\omega$-limit set of the same heteroclinic. 
called the \emph{Kasner map} $\KM:\KC \longrightarrow \KC $. %, as depicted in Figure \ref{FIG:KASNERMAPS} for a geometrical description of the map. 
See Figure \ref{FIG:KASNERMAPS}. % and \cite{HellLappicyUggla}.
Note that iterations of $\KM$ represent a sequence of heteroclinic Bianchi type II trajectories.

The map $\KM$ depends on $v$ according to three qualitatively different regimes, as in Figure \ref{FIG:KASNERMAPS}.
For the critical case $v=1/2$ associated with GR, the Kasner circle $\KC$ is inscribed in the triangle formed by the points $\Q_1,\Q_2,\Q_3$ and the map $\KM$ is known to be generically chaotic, see~\cite{mis69a,bkl70,Chernoff83,khaetal85,Cornish97,ugg13a,ugg13b} and references therein. For the supercritical case, $v\in (1/2,1)$, the three arcs are pairwise disjoint and their union does not cover $\KC$; its complement is a set consisting of stable fixed points of $\KM$ and thus it is called the \textit{stable set} $S$. The map $\KM$ is well-defined %via each corresponding map $\KM_\alpha$ for points in $\A_\alpha$ and the identity map for points in $S$. In this case, 
and it was shown that the set of points that never reaches the set $S$ is Cantor set of measure zero, %constructed by the usage of symbolic dynamics, 
in which the map $\KM$ is chaotic. % by means of conjugation to the shift map.
See \cite{HellLappicyUggla}. %In particular points for all minimal periods. (HL).
For $v\in(0,1/2)$, the map $\KM$ is multi-valued for points in the overlap regions $A_\alpha\cap A_\beta$ and uniquely determined for the remaining points. Moreover, certain choices of the multi-valued map is discontinuous at $\partial (A_\alpha\cap A_\beta)$, but even when lateral limits exist, the derivative is 1 which makes the map weakly hyperbolic. For this reason, little is known in this case and the map was conjectured to be chaotic in \cite{HellLappicyUggla}.
\begin{figure}[H]
\minipage[b]{0.37\textwidth}\centering
\begin{subfigure}\centering
    \begin{tikzpicture}[scale=1.1]

    %circle
    \draw [line width=0.1pt,domain=0:6.28,variable=\t,smooth] plot ({sin(\t r)},{cos(\t r)});

    %thick removed sets
    \draw [very thick, domain=-0.29:0.29,variable=\t,smooth] plot ({0.975*sin(\t r)},{0.975*cos(\t r)});
    \draw [rotate=120,very thick, domain=-0.29:0.29,variable=\t,smooth] plot ({0.975*sin(\t r)},{0.975*cos(\t r)});
    \draw [rotate=-120,very thick, domain=-0.29:0.29,variable=\t,smooth] plot ({0.975*sin(\t r)},{0.975*cos(\t r)});

    %new triangle
    \draw[color=gray,dashed] (0,-2.85) -- (0.98,-0.29);
    \draw[color=gray,dashed] (0,-2.85) -- (-0.98,-0.29);

    \draw[color=gray,rotate=120,dashed] (0,-2.85) -- (0.98,-0.29);
    \draw[color=gray,rotate=120,dashed] (0,-2.85) -- (-0.98,-0.29);

    \draw[color=gray,rotate=240,dashed] (0,-2.85) -- (0.98,-0.29);
    \draw[color=gray,rotate=240,dashed] (0,-2.85) -- (-0.98,-0.29);

    \filldraw[gray] (0,-2.85) circle (0.1pt);% node[anchor=north] {$\frac{\mathrm{Q}_\alpha}{v}$};

    %map
    %\draw[color=gray,dotted,thick] (0,1) -- (0,-2.85);
    %\draw[color=gray,dotted,thick] (0.5,0.85) -- (0,-2.85);
    \draw[rotate=-120,color=gray,dotted] (-0.5,0.85) -- (0,-2.85);
    \draw[rotate=-120] (-0.5,0.85);%\node at (1.3,0.1) {\scriptsize{$\mathcal{K}(p)$}};
    \draw[rotate=-120] (-0.3,-0.95);%\node at (-0.75,0.8) {\scriptsize{$p$}};

    %bold map
    \draw[rotate=-120,white,ultra thick] (-0.5,0.85) -- (-0.26,-0.95);
    \draw[rotate=-120,dotted, thick, postaction={decorate}] (-0.26,-0.95) -- (-0.5,0.85);

    %arrows
    %\draw[shift={(0.05,0.41)},rotate=240,->] (0,0.04) -- (0,0.05);

    %old triangle
    %\draw[-] (-1.85,1) -- (1.85,1);
    %\draw[-] (-1.85,1) -- (0,-1.85);
    %\draw[-] (1.85,1) -- (0,-1.85);

    %auxiliary lines
    %\draw[-] (-2.5,1.37) -- (0.88,-0.49);
    %\draw[-] (2.5,1.37) -- (-0.88,-0.49);
    %\draw[-] (0,1) -- (0,-2.7);

    %eta=0
    \draw (0,-2.85) circle (0.1pt) node[anchor=north] {$\frac{\mathrm{Q}_1}{v}$};
    \draw[rotate=240] (0,-2.85) circle (0.1pt) node[anchor=east] {$\frac{\mathrm{Q}_2}{v}$};
    \draw[rotate=120]  (0,-2.85) circle (0.1pt) node[anchor=west] {$\frac{\mathrm{Q}_3}{v}$};

    %Q points
    \draw (0,-0.95) -- (0,-1.05) node[anchor= north] {\scriptsize{$\mathrm{Q}_1$}};
    \draw[rotate=120] (0,-0.95) -- (0,-1.05) node[anchor= west] {\scriptsize{$\mathrm{Q}_3$}};
    \draw[rotate=-120] (0,-0.95) -- (0,-1.05) node[anchor= east] {\scriptsize{$\mathrm{Q}_2$}};

    %labels of intersections
    \node at (0,1.2) {\scriptsize{$A_2\cap A_3$}};
    \node at (-1.4,-0.57) {\scriptsize{$A_1\cap A_2$}};
    \node at (1.4,-0.57) {\scriptsize{$A_1\cap A_3$}};
\end{tikzpicture}
    \addtocounter{subfigure}{-1}\captionof{subfigure}{\footnotesize{Subcritical: $v\in (0,1/2)$. }}%\label{FIG:KASNERMAPS1}
\end{subfigure}
\endminipage\hfill
\minipage[b]{0.31\textwidth}\centering

\hspace{1cm}
\begin{subfigure}\centering
    \begin{tikzpicture}[scale=1.1]
    %anchor
    \draw[white] (0,-2.85) circle (0.1pt) node[anchor=north] {$\frac{\mathrm{Q}_1}{v}$};
    
    %circle
    \draw [line width=0.1pt,domain=0:6.28,variable=\t,smooth] plot ({sin(\t r)},{cos(\t r)});

    %triangle
    \draw[color=gray,dashed,-] (-1.75,1) -- (1.75,1);
    \draw[color=gray,rotate=120,dashed,-] (-1.75,1) -- (1.75,1);
    \draw[color=gray,rotate=240,dashed,-] (-1.75,1) -- (1.75,1);

    %map
    \draw[color=gray,dotted] (0,1) -- (0,-1.99);
    %\draw[color=gray,dotted,thick] (0.5,0.85) -- (0,-1.85);
    %\draw[color=gray,dotted,thick] (-0.5,0.85) -- (0,-1.85);
    \filldraw (0,-2) circle (0.1pt);% node[anchor=north] {\footnotesize{$2\mathrm{Q}_\alpha$}};

    %bold map
    \draw[white,ultra thick] (0,1) -- (0,-1);
    \draw[dotted, thick, postaction={decorate}] (0,-1) -- (0,1);

    %arrows
    %\draw[rotate=0,shift={(0,0)},->] (0,0.04) -- (0,0.05);

    %eta=0
    \draw (0,-2) circle (0.1pt);% node[anchor=north] {\scriptsize{$2\mathrm{Q}_1$}};
    \draw[rotate=240] (0,-2) circle (0.1pt);% node[anchor=east] {\scriptsize{$2\mathrm{Q}_2$}};
    \draw[rotate=120]  (0,-2) circle (0.1pt);% node[anchor=west] {\scriptsize{$2\mathrm{Q}_3$}};

    %Taub points
    \filldraw [black] (0,1) circle (1.25pt) node[anchor= south] {\scriptsize{$\mathrm{T}_1$}};
    \filldraw [black] (0.88,-0.49) circle (1.25pt) node[anchor= north west] {\scriptsize{$\mathrm{T}_2$}};
    \filldraw [black] (-0.88,-0.49) circle (1.25pt)node[anchor= north east] {\scriptsize{$\mathrm{T}_3$}};

    %Q points
    \draw (0,-0.95) -- (0,-1.05);% node[anchor= north] {\scriptsize{$p=\mathrm{Q}_1$}};
    \draw[rotate=120] (0,-0.95) -- (0,-1.05);% node[anchor= west] {\scriptsize{$\mathrm{Q}_3$}};
    \draw[rotate=-120] (0,-0.95) -- (0,-1.05);% node[anchor= east] {\scriptsize{$\mathrm{Q}_2$}};
\end{tikzpicture}
    \addtocounter{subfigure}{-1}\captionof{subfigure}{\footnotesize{Critical: $v=1/2$. }}%\label{FIG:KASNERMAPS2}
\end{subfigure}
\endminipage\hfill
\minipage[b]{0.31\textwidth}\centering
\begin{subfigure}\centering
    \begin{tikzpicture}[scale=1.1]
    %anchor
    \draw[white] (0,-2.85) circle (0.1pt) node[anchor=north] {$\frac{\mathrm{Q}_1}{v}$};

    %circle
    \draw [line width=0.1pt,domain=0:6.28,variable=\t,smooth] plot ({sin(\t r)},{cos(\t r)});

    \draw [ultra thick, dotted, white, domain=-0.26:0.26,variable=\t,smooth] plot ({sin(\t r)},{cos(\t r)});% node[anchor= north east] {$S$};
    \draw [ultra thick, dotted, white, domain=1.83:2.35,variable=\t,smooth] plot ({sin(\t r)},{cos(\t r)});
    \draw [ultra thick, dotted, white, domain=3.87:4.39,variable=\t,smooth] plot ({sin(\t r)},{cos(\t r)});

    %new triangle
    \draw[color=gray,dashed,-] (0,-1.35) -- (0.685,-0.75);
    \draw[color=gray,dashed,-] (0,-1.35) -- (-0.685,-0.75);

    \draw[color=gray,rotate=120,dashed,-] (0,-1.35) -- (0.685,-0.75);
    \draw[color=gray,rotate=120,dashed,-] (0,-1.35) -- (-0.685,-0.75);

    \draw[color=gray,rotate=-120,dashed,-] (0,-1.35) -- (0.685,-0.75);
    \draw[color=gray,rotate=-120,dashed,-] (0,-1.35) -- (-0.685,-0.75);

    %map and point p
    \draw[rotate=120,color=gray,dotted] (0,-1.35) -- (-0.9,0.5);
    %\node at (1.1,0.3) {\scriptsize{$p$}};

    %bold map
    \draw[rotate=120,white,ultra thick] (-0.2,-0.95) -- (-0.88,0.45);
    \draw[rotate=120,dotted, thick, postaction={decorate}] (-0.2,-0.95) -- (-0.88,0.45);

    %arrows
    %\draw[shift={(0.43,-0.43)},rotate=150,->] (0,0.04) -- (0,0.05);

    %auxiliary lines
    %\draw[-] (-1.85,1) -- (0.88,-0.49);
    %\draw[-] (1.85,1) -- (-0.88,-0.49);
    %\draw[-] (0,1) -- (0,-1.85);

    %eta=0
    \draw (0,-1.35) circle (0.1pt);% node[anchor=north] {\scriptsize{$\frac{\mathrm{Q}_1}{v}$}};
    \draw[rotate=240] (0,-1.35) circle (0.1pt);% node[anchor=east] {\scriptsize{$\frac{\mathrm{Q}_2}{v}$}};
    \draw[rotate=120]  (0,-1.35) circle (0.1pt);% node[anchor=west] {\scriptsize{$\frac{\mathrm{Q}_3}{v}$}};

    %Q points
    %\draw[shift={(0,-0.95)}] (0,0) -- (0,-0.1);\filldraw (0,-0.9) circle (0.001pt) node[anchor=north] {\scriptsize{$\mathcal{K}(p)=\mathrm{Q}_1$}};% node[anchor= south] {\scriptsize{$\mathcal{K}(p)=\mathrm{Q}_1$}};
    %\draw[rotate=120,shift={(0,-0.95)}] (0,0) -- (0,-0.1) node[anchor= north east] {\scriptsize{$\mathrm{Q}_3$}};
    %\draw[rotate=-120,shift={(0,-0.95)}] (0,0) -- (0,-0.1) node[anchor= north west] {\scriptsize{$\mathrm{Q}_2$}};

    \draw (0,-0.95) -- (0,-1.05); %\node at (0,-0.75) {\scriptsize{$\mathcal{K}(p)=\mathrm{Q}_1$}};
    \draw[rotate=120] (0,-0.95) -- (0,-1.05);% node[anchor= north east] {\scriptsize{$\mathrm{Q}_3$}};
    \draw[rotate=-120] (0,-0.95) -- (0,-1.05);% node[anchor= north west] {\scriptsize{$\mathrm{Q}_2$}};

    %labels of S
    \node at (0,1.2) {\scriptsize{$S$}};
    \node at (-1.1,-0.57) {\scriptsize{$S$}};
    \node at (1.1,-0.57) {\scriptsize{$S$}};

\end{tikzpicture}
    \addtocounter{subfigure}{-1}\captionof{subfigure}{\footnotesize{Supercritical: $v\in (1/2,1)$. }}%\label{FIG:KASNERMAPS3}
\end{subfigure}
\endminipage
\captionof{figure}{The geometric description of the Kasner map $\KM$ in the three different regimes. The subcritical case, $v\in (0,1/2)$, each pair of arcs $A_1,A_2,A_3$ intersects at (bold) sets, where a point has two distinct trajectories and $\KM$ is multi-valued. The critical case, $v=1/2$, corresponds with GR and each pair of arcs $A_1,A_2,A_3$ only intersect at their boundary: the Taub points $\mathrm{T}_1,\mathrm{T}_2,\mathrm{T}_3$. The supercritical case, $v\in (1/2,1)$, the complement of the arcs $A_1,A_2,A_3$ is the (dashed) stable set $S$ that consists of fixed points of $\KM$.
}\label{FIG:KASNERMAPS}
\end{figure}
Our goal is to prove that the Kasner map is chaotic for $v\in (0,1/2)$, which thereby completes the program in  \cite{HellLappicyUggla}. As a consequence, we rigorously show that GR is a critical case associated with a bifurcation where chaos becomes generic\textcolor{black}{, i.e., GR is a specific model among Ho\v{r}ava gravity where a qualitative change on the structure of solutions occurs.} Even though irregular and non-hyperbolic maps impose difficulties in proving that a given map is chaotic, % by means of implementation of symbolic dynamics, 
see \cite{Lima19}, it is the fact that the Kasner map is multi-valued that complicates the analysis. % to prove whether the map is chaotic or not, and which is the (invariant) region in the circle where chaos happens. 

For $v\in (0,1/2)$, we extend the Kasner maps which are defined only in each arc $A_\alpha$ to the whole circle, i.e. we define the maps $\mathcal{K}_{\mu \nu \zeta}:\KC\rightarrow \KC$ given by
\begin{equation}\label{Kmnzdef}
    \mathcal{K}_{\mu \nu \zeta}(p):=\begin{cases}
    \mathcal{K}_\alpha(p)\text{, if }p\in A_\alpha\setminus \mathbf{A},\\%\left(\bigcup_{1\le j<k\le 3}A_j\cap A_k\right)\\
    \mathcal{K}_{\mu}(p)\text{, if }p\in A_1\cap A_2,\\
    \mathcal{K}_{\nu}(p)\text{, if }p\in A_2\cap A_3,\\
    \mathcal{K}_{\zeta}(p)\text{, if }p\in A_1\cap A_3,
\end{cases}
\end{equation}
for each $\mu\in \{1,2\}, \nu\in\{2,3\}, \zeta\in\{1,3\}$. 
%
%Note that for any $v\in \left[1/2,1\right)$, then for any choice of symbols the map \eqref{Kmnzdef} coincides with the Kasner map, since all arcs $A_1,A_2,A_3$ are disjoint.  For $v\in \left(0,\frac{1}{2}\right)$, the 
Each map $\mathcal{K}_{\mu \nu \zeta}$ in \eqref{Kmnzdef} selects a particular map in the multi-valued region. To account for the full multi-valued dynamics, we have to consider all possible words $\mu\nu\zeta$ for the iterations. Hence, define the space of symbols
\begin{equation} 
    \Sigma:=\{\omega=(\mu_n,\nu_n,\zeta_n)_{n\ge 1}:\mu_n\in \{1,2\},\nu_n\in \{2,3\},\zeta_n\in\{1,3\} \text{ for all } n\ge 1\}.
\end{equation}
Thus for any $\omega=(\omega_n)_{n\ge 1} \in \Sigma$, we can define the following iterates for $p\in \KC$,
\begin{equation}
    \mathcal{K}_\omega^n(p):=\mathcal{K}_{\omega_n}\circ \dots \circ \mathcal{K}_{\omega_1}(p), 
\end{equation}
where $n\ge 1$ and $\mathcal{K}_{\omega}^0:=Id_{\KC}$. 
This describes the Kasner map in a skew-product fashion:
\begin{align} \label{defKM}
\KM: \Sigma\times \KC &\to \Sigma\times \KC \\
(\omega,p)&\mapsto (\sigma(\omega),\mathcal{K}_{\omega_1}(p)),
\end{align}
where $\sigma(\omega)=\sigma(\omega_1\omega_2\omega_3\ldots)=\omega_2\omega_3\ldots$ is the shift-map.
%See \cite{HellLappicyUggla} for alternative characterizations of the multi-valued Kasner map.

In order to describe the chaotic aspects of our skew-product Kasner map, we define a suitable notion of chaos. We say that the Kasner map \emph{realizes chaos} if:
\begin{itemize}
    \item[(i)] The Kasner map \emph{realizes sensitivity to initial conditions}, i.e., if there exists $\delta>0$ such that for every $p\in \KC$ and every neighborhood $U$ of $p$, there are $q\in U\setminus \{p\},\ n\in \mathbb{N}_0$ and $\omega,\omega^*\in \Sigma$ such that $d(\mathcal{K}_\omega^n(p),\mathcal{K}_{\omega^*}^n(q))\ge \delta$.
    \item[(ii)] The Kasner map \emph{realizes topological transitivity}, i.e., for any open sets $U,V\subset \KC$ there are $n\in \mathbb{N},\omega \in \Sigma$ such that $\mathcal{K}_\omega^n(U)\cap V\neq \emptyset$.
    \item[(iii)] The Kasner map \emph{realizes dense periodic orbits}, i.e., for every $p\in \KC$ and every neighborhood $U\subset \KC$ of $p$ there are $\omega \in \Sigma,q\in U,n\in \mathbb{N}$ such that $\mathcal{K}_\omega^n(q)=q$.
\end{itemize}
Note that the map in \eqref{Kmnzdef} can be appropriately extended for $v\in \left[1/2,1\right)$, for which the above definition coincides with Devaney's chaos, see \cite{Devaney}. % but it extends the notion of chaos for further parameter values where the map is multi-valued. 
Other definitions of chaos in such a multi-valued setting are possible, such as requiring generic $\omega\in \Sigma$. However, it renders a more difficult analysis to obtain rigorous results.
%\textcolor{black}{talk more about the definition of periodic orbits..., and that iterates are sets, but here we only choose one particular subset of that.}
Next, we provide our main result.

\begin{thm}%\emph{\textbf{Sensitivity on initial conditions.}}
\label{mainthm}
The (multi-valued) Kasner map realizes chaos for all $v\in (0,1/2)$.
\end{thm}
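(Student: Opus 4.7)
My plan is to transfer the classical theory of piecewise expanding interval maps to the multi-valued setting by using the freedom in the symbolic coordinate $\omega$ to always select the most expanding admissible branch. First, I would establish a uniform expansion bound: a constant $\lambda > 1$ depending only on $v \in (0,1/2)$ such that for every $p \in \KC$ at least one admissible branch $\mathcal{K}_\alpha$ at $p$ satisfies $|D\mathcal{K}_\alpha(p)| \geq \lambda$. On the pure region $A_\alpha \setminus \mathbf{A}$ the only admissible branch is $\mathcal{K}_\alpha$, and Figure \ref{fig:plotg} together with the explicit formula for $g$ yields $g(\varphi) \geq g(\mathrm{t}_*) > 1$. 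On each overlap $\mathbf{A}_{\alpha\beta}$ the two admissible derivatives $|D\mathcal{K}_\alpha|, |D\mathcal{K}_\beta|$ are continuous and equal $1$ only at the distinct boundary points $\partial A_\alpha, \partial A_\beta$ respectively; compactness of $\mathbf{A}_{\alpha\beta}$ then gives $\max(|D\mathcal{K}_\alpha|,|D\mathcal{K}_\beta|) \geq h > 1$. Setting $\lambda := \min(g(\mathrm{t}_*), h) > 1$ does the job.

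The heart of the argument is then a covering lemma: for every nondegenerate open interval $I \subset \KC$ there exist $n \in \mathbb{N}$ and $\omega \in \Sigma$ with $\mathcal{K}_\omega^n(I) = \KC$. To prove it, I would track a nested family $I \supset I_1 \supset I_2 \supset \cdots$ of sub-intervals, each contained in a single cell of the finite partition $\{A_\alpha \setminus \mathbf{A},\, \mathbf{A}_{\alpha\beta}\}$ of $\KC$, with $\omega$ chosen step by step so that $\mathcal{K}_{\omega_k}$ is smooth on the image $\mathcal{K}_\omega^{k-1}(I_k)$ and expands it by at least $\lambda$. Whenever an image straddles a cell boundary I would restrict to its largest connected component within a single cell, losing at most a bounded constant factor. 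Since the partition has only finitely many cells, a pigeonhole argument combined with the geometric expansion by $\lambda$ forces the image to eventually fill a whole cell; thereafter, $\mathcal{K}_\alpha(A_\alpha) = \overline{A_\alpha^c}$ together with $A_1 \cup A_2 \cup A_3 = \KC$ gives $\mathcal{K}_\omega^n(I) = \KC$ after a bounded number of further iterations.

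Given the covering lemma, the three chaos conditions follow almost directly. Topological transitivity is immediate: for open $U, V$, a suitable iterate of $U$ equals $\KC$ and hence intersects $V$. For dense periodic orbits, given $U$ open, the covering lemma yields $\omega, n$ and a sub-interval $J \subset U$ on which $\mathcal{K}_\omega^n$ is continuous and monotonically maps $J$ onto an interval containing $J$; the intermediate value theorem then furnishes a fixed point of $\mathcal{K}_\omega^n$ in $J \subset U$, giving a periodic orbit. For sensitivity, I would take $\delta$ equal to one third of the diameter of $\KC$ and, for any $p$ and neighborhood $U$, use the covering lemma to find $\omega^*, n$ with $\mathcal{K}_{\omega^*}^n(U) = \KC$; some $q \in U$ will then satisfy $d(\mathcal{K}_{\omega^*}^n(q), \mathcal{K}_\omega^n(p)) \geq \delta$ for any fixed $\omega$.

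The main obstacle is the covering lemma. Pointwise expansion by $\lambda$ requires selecting the optimal branch pointwise, while a single $\omega$ applies the same branch to every point of a sub-interval; the optimal selection can therefore be realized uniformly only when the sub-interval lies in a single cell of the partition, forcing repeated restriction to sub-intervals whose cost must be controlled against the geometric expansion. Carrying out this bookkeeping via the finite partition, together with the continuity of the derivatives up to the piece boundaries, is the technical crux of the proof.
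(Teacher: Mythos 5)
Your overall strategy---reduce everything to a covering lemma for a pointwise-optimally-expanding selection---is genuinely different from the paper's, which instead only tracks a connected image until it first meets the multi-valued set $\mathbf{A}$ and then exploits the branching freedom there. Unfortunately the covering lemma, which you correctly identify as the crux, has two gaps as proposed. First, the loss-versus-gain bookkeeping does not close: the uniform expansion constant $\lambda$ is \emph{not} bounded away from $1$ uniformly in $v$ (indeed $g(\mathrm{t}_*)\to 1$ and your $h\to 1$ as $v\to 0$, and $|D\mathcal{K}_\alpha|=1$ exactly at $\partial A_\alpha$, which lies \emph{inside} the overlap cell $\mathbf{A}_{\alpha\beta}$, so no single branch is uniformly expanding on a whole overlap cell and the partition must be refined further), whereas each restriction to the largest monotone piece loses a factor of at least $2$. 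With $\lambda<2$ the "bounded constant loss against geometric gain" argument fails outright. The paper avoids this entirely: as long as the connected image avoids $\mathbf{A}$ it lies in a single pure arc $A_\alpha\setminus\mathbf{A}$, where the selected map is continuous and uniformly expanding, so no splitting ever occurs and the length grows geometrically until the image meets $\mathbf{A}$ --- that is all that is needed.

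Second, the endgame of your covering lemma is asserted but not justified, and its truth is unclear: once an image contains a full arc $A_\alpha$, one further iterate yields $\overline{A_\alpha^c}$, but to then cover $\mathrm{int}(A_\alpha)$ you would need to apply \emph{both} $\mathcal{K}_\beta$ and $\mathcal{K}_\gamma$ on the overlap $A_\beta\cap A_\gamma\subset \overline{A_\alpha^c}$, which a single selection $\mathcal{K}_{\mu\nu\zeta}$ cannot do. The paper sidesteps this by showing only that the \emph{union} of the images $\mathcal{K}^k_{\omega'}(A_\beta\cap A_\gamma)$ and $\mathcal{K}^k_{\omega''}(A_\beta\cap A_\gamma)$ over $k=1,\dots,m+1$ and two alternating symbol sequences covers $\KC$; this weaker statement suffices for transitivity (choose the $(k,\omega)$ whose image meets $V$) and, combined with a contraction-mapping argument on pre-images, for dense periodic orbits. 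Your derivations of the three chaos properties from the covering lemma would be fine if the lemma held, and your uniform best-branch expansion observation is essentially correct, but as it stands the proof of the lemma is the missing piece rather than a routine technicality.
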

{
\color{black}
Sensitivity to initial data has been observed numerically for some HL models in \cite{Miso11}. %, as solutions nearby an unstable FLRW solution evolve apart in time. 
Thus, Theorem~\ref{mainthm} goes beyond rigorously confirming these predictions by means of the Kasner map, as we establish the occurrence of chaos (via topological transitivity and dense periodic orbits).
Moreover, similar to GR, it is heuristically argued in \cite[Appendix A]{HellLappicyUggla} that the generic dynamics of Bianchi type IX models in HL gravity should be described by the appropriate Kasner map. See also \cite{Bakas10,GiKam17}. Thus, not only generic Bianchi type IX solutions in HL are expected to evolve apart (due to sensitivity), but also display a sequence of Kasner-like states by revisiting arbitrary neighborhoods of the Kasner circle (due to topological transitivity). Density of periodic orbits of the Kasner map indicates that generic Bianchi type IX solutions should shadow concatenations of heteroclinic chains of Bianchi type II solutions, for example within a stable manifold, see \cite{Liebscher}.

The remaining of the paper is organized as follows. In Section \ref{sec:pf}, we prove Theorem \ref{mainthm}. In Section \ref{sec:disc} we discuss the main features of our proof, some consequences and future directions to be pursued. In Appendix \ref{app}, we derive the dominant ODE model \eqref{full:subs}. 
}

%{\color{black}

%references we might want to include: 

%OLD STUFF:
%google:  the ergodic theory of expanding maps implies those maps have an invariant measure. perhaps: Uniqueness of the SRB measure for piecewise expanding weakly coupled map lattices in any dimension, Communications in Mathematical Physics, 2016. 

%for different notions of expansivity in metric spaces, see %https://www.sciencedirect.com/science/article/pii/0166864182900402

%see introduction of %http://www.cemeai.icmc.usp.br/Reconnect/wp-content/uploads/2019/03/robustness_of_ergodic_properties_of_nonautonomous_piecewise_expanding_maps.pdf 

%best: %https://www.cambridge.org/core/journals/ergodic-theory-and-dynamical-systems/article/some-properties-of-positive-entropy-maps/4F36566656023B570D11BBD763EDB700

%and %https://projecteuclid.org/euclid.dmj/1077376069

%lastly: %https://www.scopus.com/record/display.uri?eid=2-s2.0-84895909147&origin=resultslist&sort=plf-f&cite=2-s2.0-0037995951&src=s&imp=t&sid=1778c06ddb3c656bfef5ed4f96e70ad8&sot=cite&sdt=a&sl=0&relpos=13&citeCnt=4&searchTerm=

%a thesis overview: %https://www.researchgate.net/profile/Alfonso_Artigue/publication/284900645_Expansive_Dynamical_Systems/links/565a233908aeafc2aac50e4d/Expansive-Dynamical-Systems.pdf

%}

%The remaining of this paper is organized as follows. In the next Section \ref{sec:pf} we prove Theorem \ref{mainthm} and in section \ref{sec:disc} we comment on possible generalizations of our result.

\section{Proof}\label{sec:pf}

%\begin{lem}
%    For every $v\in (0,1)$ there exists a constant $C_v>1$ such that     $$\vert \mc{K}_i(U)\vert \ge C_v\vert U\vert$$     for $i=1,2,3$ and every connected set $U\subset A_i\setminus M$.
%\end{lem}
%\begin{proof}     \textcolor{black}{Write the proof in terms of derivatives on the circle manifold}.\end{proof}

The proof of our main result relies on the following Lemma.

\begin{lem}\label{lem:intOVERLAP}
Consider a connected open set $U\subset \KC$.
\begin{enumerate}
    \item[(i)] For any $\omega\in \Sigma$, there is an $N\in \mathbb{N}_0$ such that $\mc{K}_{\omega}^N(U)$ is connected and $\mc{K}^N_{\omega}(U)\cap \mathbf{A}\neq \emptyset$. %If $U$ is connected, $U\cap M=\emptyset$ and $n$ is minimal then $\mc{K}_{\omega}^n(U)$ is connected.
    \item[(ii)] There are $\omega \in \Sigma, M\in \mb{N}_0$ such that $\KM^M_\omega(U) \supseteq A_\alpha \cap A_\beta$ for some $\alpha, \beta \in \{1,2,3\}, \alpha\neq \beta$.
\end{enumerate}
\end{lem}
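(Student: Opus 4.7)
Both parts rely on the uniform expansion bound from Figure~\ref{fig:plotg}: $|D\mc{K}_\alpha|\ge g(\mathrm{t}_*)=:\kappa>1$ on $A_\alpha\setminus\mathbf{A}$ for each $\alpha\in\{1,2,3\}$. For part (i) I would proceed by contradiction, assuming $V_n:=\mc{K}_\omega^n(U)$ avoids $\mathbf{A}$ for every $n\ge 0$. The three arcs $A_\alpha\setminus\mathbf{A}$ are precisely the connected components of $\KC\setminus\mathbf{A}$, so a connected set disjoint from $\mathbf{A}$ lies entirely in one such component $A_{\alpha_n}\setminus\mathbf{A}$. On this set the overlap branches $(\mu,\nu,\zeta)=\omega_{n+1}$ in~\eqref{Kmnzdef} are irrelevant, so $\mc{K}_{\omega_{n+1}}|_{V_n}=\mc{K}_{\alpha_n}|_{V_n}$ is smooth; this preserves connectedness of $V_{n+1}$ and gives $|V_{n+1}|\ge \kappa|V_n|$. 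Iterating produces $|V_n|\ge\kappa^n|U|\to\infty$, contradicting $|V_n|\le |\KC|$. Taking the smallest $N$ with $V_N\cap\mathbf{A}\ne\emptyset$ then proves (i), with $V_N$ still connected by the induction above.

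For part (ii), set $V:=\mc{K}_\omega^N(U)$. If $V\supseteq A_\alpha\cap A_\beta$ for some pair, we are done. Otherwise, observe that the six arcs $A_1\setminus\mathbf{A},\,\mathbf{A}_{12},\,A_2\setminus\mathbf{A},\,\mathbf{A}_{23},\,A_3\setminus\mathbf{A},\,\mathbf{A}_{13}$ alternate around $\KC$, so any connected subset meeting two distinct components of $\KC\setminus\mathbf{A}$ must traverse, and hence contain, at least one full overlap region. A short case analysis then shows that the only remaining situation is $V\subseteq A_\alpha$ for a single $\alpha$ with $V\cap\mathbf{A}\ne\emptyset$.

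In this last case I would extend $\omega$ adaptively so that each subsequent $\omega_\ell$ is compatible with the current iterate lying in a single arc $A_{\alpha_\ell}$: the constrained entries of the triple can always be chosen to match $\alpha_\ell$ because $V_\ell\subseteq A_{\alpha_\ell}$ never meets the opposite overlap $\mathbf{A}_{\beta\gamma}$ with $\beta,\gamma\ne\alpha_\ell$. Then $\mc{K}_{\omega_\ell}|_{V_\ell}=\mc{K}_{\alpha_\ell}|_{V_\ell}$ is continuous and $V_\ell$ stays connected. Since $|\overline{A_\alpha^c}|>|A_\alpha|$ for $v<1/2$, combined with $g\ge 1$ on $A_\alpha$ (strictly $>1$ off $\partial A_\alpha$), the lengths $|V_\ell|$ strictly increase and eventually exceed $|A_{\alpha_\ell}|$, forcing $V_\ell$ to span two components of $\KC\setminus\mathbf{A}$ and reducing to the previous case. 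The main obstacle is quantifying this termination near the two non-hyperbolic fixed points of $\mc{K}_\alpha$ on $\partial A_\alpha$ (where $g=1$), which lie inside $\mathbf{A}_{\alpha\beta}$ and $\mathbf{A}_{\alpha\gamma}$ and near which iterates expand only marginally; I would resolve this by switching to the branches $\mc{K}_\beta$ or $\mc{K}_\gamma$ in the relevant overlap whenever the iterate drifts toward $\partial A_\alpha$, exploiting $\partial A_\alpha\subset\mathrm{int}(A_\beta)\cup\mathrm{int}(A_\gamma)$ so that the alternative branches have $g>1$ strictly at those points, restoring uniform expansion and yielding termination in finitely many steps.
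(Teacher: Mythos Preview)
Your argument for part (i) is correct and coincides with the paper's: connectedness forces each $V_n$ to sit in one component $A_{\alpha_n}\setminus\mathbf{A}$, the branch choices are then irrelevant, and the uniform bound $|D\mc{K}_{\alpha_n}|\ge g(\mathrm{t}_*)>1$ on that component yields geometric growth of $|V_n|$, a contradiction.

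For part (ii) your scaffolding is sound---iterate with the $\alpha$-compatible branch to preserve connectedness, and note that if ever $V_{\ell+1}$ fails to lie in a single arc then, by your own six-arc alternation remark, it contains a full overlap---but the termination step is not closed. Strict increase of $|V_\ell|$ does not by itself force $|V_\ell|$ past $|A_\alpha|$; the sequence could accumulate at a limit below it, precisely because of the non-hyperbolic endpoints you flag. Your proposed fix, switching to $\mc{K}_\beta$ or $\mc{K}_\gamma$ inside $\mathbf{A}_{\alpha\beta}$ or $\mathbf{A}_{\alpha\gamma}$, is problematic in the only case that matters: when $V_\ell$ straddles a point of $\partial\mathbf{A}$, the map $\mc{K}_{\mu\nu\zeta}$ of~\eqref{Kmnzdef} with the ``other'' branch on the overlap is genuinely discontinuous there, so connectedness of $V_{\ell+1}$ is lost and your later reduction (connected set spanning two components of $\KC\setminus\mathbf{A}$ contains a full overlap) no longer applies. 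If instead $V_\ell$ lies entirely inside $\mathbf{A}_{\alpha\beta}$ you can indeed swap branches, but then you must still argue a uniform expansion bound, which you have not done.

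The paper circumvents this difficulty altogether by counting how many points of $\partial\mathbf{A}$ the iterate meets. Assuming for contradiction that every connected iterate meets $\partial\mathbf{A}$ in at most one point, it extracts the increasing sequence of times $(\tilde N_k)$ at which this intersection is exactly one point; between such times the iterate lies in some $A_{\alpha}\setminus\mathbf{A}$, so gaps $\tilde N_{k+1}>\tilde N_k+1$ yield infinitely many uniformly expanding steps and the same length blow-up as in (i). The only alternative, $\tilde N_{k+1}=\tilde N_k+1$ for some $k$, is handled by a direct two-step geometric computation showing $\mc{K}_\omega^{\tilde N_k+2}(U)\supseteq \mathbf{A}_{\alpha\gamma}$ for a suitable choice of $\omega$, which contradicts the one-point hypothesis. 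Once some iterate meets $\partial\mathbf{A}$ in at least two points, either it already contains an overlap or it contains a full $A_\alpha\setminus\mathbf{A}$, whose image under $\mc{K}_\alpha$ contains $\mathbf{A}_{\beta\gamma}$. This structure replaces your branch-switching with a dichotomy on the spacing of the $\tilde N_k$ and avoids the connectedness pitfall entirely.
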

\begin{proof}%[Proof of Lemma \ref{lem:intOVERLAP}]
    %Without loss of generality, for the first part we may assume that $U$ is connected since $U=\bigcup_{k\ge 1} U_k$ where every $U_k$ is an open and connected set; then if $\mathcal{K}^n_{\omega}(U_k)\cap I\neq \emptyset$ for some $n$ and $k$ then $\mathcal{K}^n_{\omega}(U)\cap I\neq \emptyset$ as $\mathcal{K}^n_{\omega}(U_k)\cap I\subset \mathcal{K}^n_{\omega}(U)\cap I$. 
    %
    %Suppose, towards a contradiction, that $\mathcal{K}^n_{\omega}(U)\cap \mathbf{A}=\emptyset$ for every $n\ge 0$. We claim that for each $n\ge 0$ there exists $\alpha_n\in \{1,2,3\}$ such that $\mathcal{K}^n_{\omega}(U)$ is connected and $\mathcal{K}^n_{\omega}(U)\subset A_{\alpha_n}\setminus \mathbf{A}$. Inductively, for $n=0$ the claim is trivial. If it holds for $n\ge 0$ then $\mathcal{K}^{n+1}_{\omega}(U)=\mathcal{K}_{\mu_n \nu_n \zeta_n} \left(\mathcal{K}^n_{\omega}(U)\right)=\mathcal{K}_{\alpha_n} \left(\mathcal{K}^n_{\omega}(U)\right)$ is connected since $\mathcal{K}_{\alpha_n}$ is continuous and $\mathcal{K}^n_{\omega}(U)$ is connected thus $\mathcal{K}^{n+1}_{\omega}(U)\cap M=\emptyset$ implies $\mathcal{K}^n_{\omega}(U)\subset A_{\alpha_{n+1}}\setminus M$ for some $\alpha_{n+1}\in \{1,2,3\}$.\\
    %
    %Therefore as $|\mathcal{K}_{\alpha_n}(\mathcal{K}_{\omega}^{n-1}(U))|\ge c_v |\mathcal{K}_{\omega}^{n-1}(U)|$ for every $n\ge 1$, according to Figure \ref{fig:plotg}, then $$ |\mathcal{K}^{n}_{\omega}(U)|\ge c_v^{n-1} |U|\ \forall n\ge 1 $$     and $c_v>1$ leads to     $$ \lim_{n\to \infty} |\mathcal{K}^{n}_{\omega}(U)|=\infty $$     contradicting the fact that the Kasner circle $\KC$ has finite measure.
    %
    To prove $(i)$, suppose towards a contradiction that $\mathcal{K}^n_{\omega}(U)\cap \mathbf{A}=\emptyset$ for every $n\ge 0$. Note that for each $n\ge 0$ there exists $\alpha_n\in \{1,2,3\}$ such that $\mathcal{K}^n_{\omega}(U)$ is connected and $\mathcal{K}^n_{\omega}(U)\subset A_{\alpha_n}\setminus \mathbf{A}$. %otherwise the lemma is proved.
    Moreover, since the Kasner map is strictly expanding in $A_{\alpha_n}\setminus \mathbf{A}$, according to Figure \ref{fig:plotg}, we obtain that
    \begin{equation}
        |\mathcal{K}^{n}_{\omega}(U)|\ge |D\KM (\mathrm{t}_*)|^{n-1} \, |U|,
    \end{equation}
    for all $n\ge 1$, where $|D\KM (\mathrm{t}_*)|>1$. Thus $\lim_{n\to \infty} |\mathcal{K}^{n}_{\omega}(U)|=\infty$, which contradicts the fact that the image of the Kasner map lies in the Kasner circle (of finite measure).
%\end{proof}

%\begin{proof}%[Proof of Lemma \ref{lem:coversArc}]
    To prove $(ii)$, we first claim that there exists a symbol $\omega \in \Sigma$ and $n\in \mb{N}_0$ such that $\mc{K}_\omega^n(U)$ is open, connected and $\mc{K}_\omega^n(U)\cap \partial \mathbf{A}$ contains at least 2 points.
    %\begin{equation}
    %    \# \mc{K}_\omega^n(U)\cap \partial M\ge 2.
    %\end{equation}
    Suppose, towards a contradiction, that for every $\omega \in \Sigma$ and $n\in \mb{N}_0$, the set $\mc{K}_\omega^n(U)\cap \partial \mathbf{A}$ contain strictly less than 2 points. Note that in this case, there are symbols $\omega\in \Sigma$ such that the set $\KM_\omega^n(U)$ is open and connected for any $n\in \mb{N}_0$. 
    This implies that for any $n\in \mb{N}_0$, there is $\alpha(n)\in \{1,2,3\}$ such that $\mc{K}_\omega^n(U)\subset A_{\alpha(n)}$ and  $\mc{K}_\omega^n(U)\cap \partial \mathbf{A}$ contain strictly less than 2 points.

    %Indeed, note that if $U\subset \mb{S}^1$ is open, connected and $U\cap \partial \mathbf{A}$ contains strictly less than 2 points, then there is a symbol $\omega\in \Sigma$ such that $\mc{K}_{\omega}(U)$ is also open and connected.
    %\begin{equation}    \label{1coversArc}
    %    \# V\cap \partial M<2\Rightarrow \exists \mu,\nu,\zeta \text{ s.t. }\mc{K}_{\mu \nu \zeta}(V)\text{ is open and connected}
    %\end{equation}

    %Since $\mc{K}_\omega^n(U)$ is open, connected and $\mc{K}_\omega^n(U)\cap \partial \mathbf{A}$ contains strictly less than 2 points, then $\mc{K}_\omega^{n+1}(U)$ is also open and connected for some $\omega\in \Sigma$.  Hence we can construct a symbol $\omega \in \Sigma$ such that $\mc{K}_\omega^n(U)$ is open and connected for every $n\in \mb{N}_0$. It implies that for every $n\in \mb{N}_0$ there is $\alpha(n)\in \{1,2,3\}$ such that $\mc{K}_\omega^n(U)\subset A_{\alpha(n)}$ and $\# \mc{K}_\omega^n(U)\cap \partial M<2$.\\
    
    Let $(N_k)_{k\ge 1}$ be the increasing sequence of all indices such that $\mc{K}_\omega^{N_k}(U)\cap \mathbf{A}\neq \emptyset$ which was proved in item $(i)$. 
    If $\mc{K}_\omega^{N_k}(U)\cap \partial \mathbf{A}=\emptyset$ for all $k$, then $\mc{K}_\omega^{N_k}(U)\subset \mathrm{int}(\mathbf{A})$ and thereby $\mc{K}_\omega^{N_k+1}(U)\subset A_{\alpha(n)}\setminus \mathbf{A}$. This yields an infinite sequence of maps with uniform expansion, which yields a contradiction akin to the proof of item $(i)$.
    Therefore $\mc{K}_\omega^{N_{*}}(U)\cap \partial \mathbf{A}$ contains exactly 1 point for some $N_*$.
    Thus, there is a symbol $\omega \in \Sigma$ such that  $\mc{K}_\omega^{N_{*}+1}(U)\cap \partial \mathbf{A}=\emptyset$ and we can repeat the above argument in order to construct an increasing sequence $(\tilde{N}_k)_{k\ge 1}$ of indices such that $\mc{K}_\omega^{\tilde{N}_k}(U)\cap \partial \mathbf{A}$ contains exactly 1 point. We have two cases:
    
    \begin{enumerate}
        \item[1.] If $\tilde{N}_{k+1}\neq \tilde{N}_k+1$ for all $k\in\mathbb{N}_0$, then $\mc{K}_\omega^{\tilde{N}_{k}+1}\subset A_{\alpha(\tilde{N}_{k}+1)}$ for all $k\in\mathbb{N}_0$. Hence, there is an infinite subsequence of indices for which the Kasner maps has uniform expansion, yielding a contradiction.
        {
        \item[2.] If $\tilde{N}_{k+1}=\tilde{N}_{k}+1$ for some  $k\in\mathbb{N}_0$, then
        \begin{equation}
            \mc{K}_\omega^{\tilde{N}_{k}+1}(U) \supseteq A_{\alpha(\tilde{N}_{k}+1)}\setminus \left(\mathbf{A}\cup \mc{K}_{\alpha(\tilde{N}_k)}(\mathbf{A}_{{\alpha(\tilde{N}_{k})} {\alpha(\tilde{N}_{k}+1)}})\right).
        \end{equation}
        for some $\omega\in\Sigma$, where $\KM_{\alpha(\tilde{N}_k)}$ is the map defined in the arc $A_{\alpha(\tilde{N}_k)}$ as in \eqref{KM_angle}. 
        Thus, % one can verify using \eqref{KM_angle} that
        \begin{subequations}
            \begin{align}
            \mc{K}_\omega^{\tilde{N}_{k}+2}(U)&\supseteq  \mc{K}_{\alpha(\tilde{N}_k+1)}\left(A_{\alpha(\tilde{N}_{k}+1)}\setminus (\mathbf{A}\cup \mc{K}_{\alpha(\tilde{N}_k)}(\mathbf{A}_{{\alpha(\tilde{N}_{k})} {\alpha(\tilde{N}_{k}+1)}}))\right) \\
            &\supseteq A_{\alpha(\tilde{N}_{k})} \cap A_{\alpha(\tilde{N}_{k}+2)}.%\mathbf{A}_{{\alpha(\tilde{N}_{k})} {\alpha(\tilde{N}_{k}+2)}},
            \end{align}
        \end{subequations}
        %where $\KM_{\alpha(\tilde{N}_k+1)}$ is the map defined in the arc $A_{\alpha(\tilde{N}_k+1)}$ as in \eqref{KM_angle}. 
        Therefore $\mc{K}_\omega^{\tilde{N}_{k}+2}(U)\cap \partial \mathbf{A}$ contains at least 2 points, which yields a contradiction.
        }
\begin{figure}[H]
\minipage[b]{0.49\textwidth}\centering
\begin{subfigure}\centering
    \begin{tikzpicture}[scale=1.1]

    %map and map2
    \draw[rotate=120,dotted,color=gray] (0.995,0.09) -- (0,-2.85); 
    \draw[rotate=120,color=gray,dotted] (0,1) -- (0,-2.85);

    %partial map and map2 removal
    \draw[rotate=120,white,ultra thick] (0.995,0.09) -- (0.73,-0.7);
    \draw[rotate=120,white,ultra thick] (0,-1) -- (0,1);
    
    %blue arc
    \draw [color=blue,line width=3pt, domain=3.75:5.24,variable=\t,smooth] plot ({0.98*sin(\t r)},{0.98*(-cos(\t r))});
    
    %orange arc
    \draw [color=orange,line width=1.8pt, domain=-0.29:-0.78,variable=\t,smooth] plot ({0.98*sin(\t r)},{0.98*cos(\t r)});
    \node at (-1.95,0.9){\scriptsize{$\KM_{\alpha(\tilde{N}_k)}(\mathbf{A}_{\alpha(\tilde{N}_k)\alpha(\tilde{N}_k+1)})$}};
    
    %cyan arc
    \draw [color=cyan,line width=3pt, domain=2.09:2.95,variable=\t,smooth] plot ({0.98*sin(\t r)},{0.98*(-cos(\t r))});
    
    %final map    
    \draw[rotate=120,dotted, thick, postaction={decorate}] (0,-1) -- (0,1);
    \draw[rotate=120,dotted, thick, postaction={decorate}] (0.73,-0.7) -- (0.995,0.09);

    %labels
    \node at (1.2,0.9) {\scriptsize{$\KM_{\omega}^{\tilde{N}_k}(U)$}};
    \node at (-1.7,0.34) {\scriptsize{$\KM_{\omega}^{\tilde{N}_k+1}(U)$}};
    
    %circle
    \draw [line width=0.1pt,domain=0:6.28,variable=\t,smooth] plot ({sin(\t r)},{cos(\t r)});

    %thick removed sets
    \draw [very thick, domain=-0.29:0.29,variable=\t,smooth] plot ({0.975*sin(\t r)},{0.975*cos(\t r)});
    \draw [rotate=120,very thick, domain=-0.29:0.29,variable=\t,smooth] plot ({0.975*sin(\t r)},{0.975*cos(\t r)});
    \draw [rotate=-120,very thick, domain=-0.29:0.29,variable=\t,smooth] plot ({0.975*sin(\t r)},{0.975*cos(\t r)});

    %new triangle
    \draw[color=gray,dashed] (0,-2.85) -- (0.98,-0.29);
    \draw[color=gray,dashed] (0,-2.85) -- (-0.98,-0.29);

    \draw[color=gray,rotate=120,dashed] (0,-2.85) -- (0.98,-0.29);
    \draw[color=gray,rotate=120,dashed] (0,-2.85) -- (-0.98,-0.29);

    \draw[color=gray,rotate=240,dashed] (0,-2.85) -- (0.98,-0.29);
    \draw[color=gray,rotate=240,dashed] (0,-2.85) -- (-0.98,-0.29);

    \filldraw[gray] (0,-2.85) circle (0.1pt);% node[anchor=north] {$\frac{\mathrm{Q}_\alpha}{v}$};

    %map
    %\draw[color=gray,dotted,thick] (0,1) -- (0,-2.85);
    %\draw[color=gray,dotted,thick] (0.5,0.85) -- (0,-2.85);
    %\draw[rotate=-120,color=gray,dotted] (-0.5,0.85) -- (0,-2.85);
    %\draw[rotate=-120] (-0.5,0.85);%\node at (1.6,0) {\scriptsize{$\KM_\omega^{N+1}(p)$}};
    %\draw[rotate=-120] (-0.3,-0.95);%\node at (-0.75,0.8) {\scriptsize{$\KM_\omega^{N}(p)$}};

    %bold map
    %\draw[rotate=-120,white,ultra thick] (-0.5,0.85) -- (-0.26,-0.95);
    %\draw[rotate=-120,dotted, thick, postaction={decorate}] (-0.26,-0.95) -- (-0.5,0.85);

    %arrows
    %\draw[shift={(0.05,0.41)},rotate=240,->] (0,0.04) -- (0,0.05);

    %old triangle
    %\draw[-] (-1.85,1) -- (1.85,1);
    %\draw[-] (-1.85,1) -- (0,-1.85);
    %\draw[-] (1.85,1) -- (0,-1.85);

    %auxiliary lines
    %\draw[-] (-2.5,1.37) -- (0.88,-0.49);
    %\draw[-] (2.5,1.37) -- (-0.88,-0.49);
    %\draw[-] (0,1) -- (0,-2.7);

    %eta=0
    \draw (0,-2.85) circle (0.1pt) node[anchor=north] {\scriptsize{$\frac{\mathrm{Q}_{\alpha(\tilde{N}_k+2)}}{v}$}};
    \draw[rotate=240] (0,-2.85) circle (0.1pt) node[anchor=east] {\scriptsize{$\frac{\mathrm{Q}_{\alpha(\tilde{N}_k+1)}}{v}$}};
    \draw[rotate=120]  (0,-2.85) circle (0.1pt) node[anchor=west] {\scriptsize{$\frac{\mathrm{Q}_{\alpha(\tilde{N}_k)}}{v}$}};

    %Q points
    \draw (0,-0.95) -- (0,-1.05);% node[anchor= north] {\scriptsize{$\mathrm{Q}_1$}};
    \draw[rotate=120] (0,-0.95) -- (0,-1.05);% node[anchor= west] {\scriptsize{$\mathrm{Q}_3$}};
    \draw[rotate=-120] (0,-0.95) -- (0,-1.05);% node[anchor= east] {\scriptsize{$\mathrm{Q}_2$}};
    
    %labels of intersections
    %\node at (0,1.2) {\scriptsize{$A_\alpha\cap A_\beta$}};
    %\node at (-1.4,-0.57) {\scriptsize{$A_1\cap A_2$}};
    %\node at (1.4,-0.57) {\scriptsize{$A_1\cap A_3$}};
\end{tikzpicture}
    \addtocounter{subfigure}{-1}\captionof{subfigure}{\footnotesize{The (blue) set $\mc{K}_\omega^{\tilde{N}_{k}+1}(U)$ contains $A_{\alpha(\tilde{N}_{k}+1)}$ minus the (orange) set $\mc{K}_{\alpha(\tilde{N}_k)}(\mathbf{A}_{{\alpha(\tilde{N}_{k})} {\alpha(\tilde{N}_{k}+1)}})$ and the (bold) sets $\mathbf{A}$.}}
\end{subfigure}
\endminipage\hfill
\minipage[b]{0.49\textwidth}\centering
\begin{subfigure}\centering
    \begin{tikzpicture}[scale=1.1]

    %map for the blue set
    %\draw[rotate=-120,green,ultra thick] (-0.41,-0.9) -- (-0.745,0.65);
    %\draw[rotate=-120,color=black] (-0.745,0.65) -- (0,-2.85); 

    %map and map2
    \draw[rotate=-120,color=gray,dotted] (-0.5,0.86) -- (0,-2.85); 
    \draw[rotate=-120,color=gray,dotted] (0.98,0.22) -- (0,-2.85); 
    
    %map and map2 removal
    \draw[rotate=-120,white,ultra thick] (-0.25,-0.96) -- (-0.5,0.86);
    \draw[rotate=-120,white,ultra thick] (0.67,-0.75) -- (0.98,0.22);

    %blue arc
    \draw [color=blue,line width=3pt, domain=3.75:5.24,variable=\t,smooth] plot ({0.98*sin(\t r)},{0.98*(-cos(\t r))});
    
    %orange arc
    \draw [color=orange,line width=1.8pt, domain=-0.29:-0.78,variable=\t,smooth] plot ({0.98*sin(\t r)},{0.98*cos(\t r)});
    
    %cyan arc
    \draw [color=cyan,line width=3pt, domain=2.09:2.95,variable=\t,smooth] plot ({0.98*sin(\t r)},{0.98*(-cos(\t r))});

    %pink arc
    \draw [color=pink,line width=3pt, domain=1.53:3.45,variable=\t,smooth] plot ({0.98*sin(\t r)},{0.98*cos(\t r)});
    \node at (0.2,-1.25){\scriptsize{$\mc{K}_{\alpha(\tilde{N}_k+1)}(A_{\alpha(\tilde{N}_{k}+1)}\setminus (\mathbf{A}\cup \mc{K}_{\alpha(\tilde{N}_k)}(\mathbf{A}_{{\alpha(\tilde{N}_{k})} {\alpha(\tilde{N}_{k}+1)}}))) $}};
    
    %final map    
    \draw[rotate=-120,dotted, thick, postaction={decorate}] (-0.25,-0.96) -- (-0.5,0.86);
    \draw[rotate=-120,dotted, thick, postaction={decorate}] (0.67,-0.75) -- (0.98,0.22);

    %black arc
    %\draw [color=black,line width=2.8pt, domain=5.52:5.9,variable=\t,smooth] plot ({1*sin(\t r)},{1*(-cos(\t r))});
    %\node at (-1,-1.25){\scriptsize{$\KM_{\omega_{\tilde{N}_k+2}}(A_{\alpha(\tilde{N}_{k}+1)}\setminus (\mathbf{A}\cup \mc{K}_{\omega_{\tilde{N}_{k}}}(\mathbf{A}_{{\alpha(\tilde{N}_{k})} {\alpha(\tilde{N}_{k}+1)}})))$}};

    %circle
    \draw [line width=0.1pt,domain=0:6.28,variable=\t,smooth] plot ({sin(\t r)},{cos(\t r)});

    %thick removed sets
    \draw [very thick, domain=-0.29:0.29,variable=\t,smooth] plot ({0.975*sin(\t r)},{0.975*cos(\t r)});
    \draw [rotate=120,very thick, domain=-0.29:0.29,variable=\t,smooth] plot ({0.975*sin(\t r)},{0.975*cos(\t r)});
    \draw [rotate=-120,very thick, domain=-0.29:0.29,variable=\t,smooth] plot ({0.975*sin(\t r)},{0.975*cos(\t r)});

    %new triangle
    \draw[color=gray,dashed] (0,-2.85) -- (0.98,-0.29);
    \draw[color=gray,dashed] (0,-2.85) -- (-0.98,-0.29);

    \draw[color=gray,rotate=120,dashed] (0,-2.85) -- (0.98,-0.29);
    \draw[color=gray,rotate=120,dashed] (0,-2.85) -- (-0.98,-0.29);

    \draw[color=gray,rotate=240,dashed] (0,-2.85) -- (0.98,-0.29);
    \draw[color=gray,rotate=240,dashed] (0,-2.85) -- (-0.98,-0.29);

    \filldraw[gray] (0,-2.85) circle (0.1pt);% node[anchor=north] {$\frac{\mathrm{Q}_\alpha}{v}$};

    %map
    %\draw[color=gray,dotted,thick] (0,1) -- (0,-2.85);
    %\draw[color=gray,dotted,thick] (0.5,0.85) -- (0,-2.85);
    %\draw[rotate=-120,color=gray,dotted] (-0.5,0.85) -- (0,-2.85);
    %\draw[rotate=-120] (-0.5,0.85);%\node at (1.6,0) {\scriptsize{$\KM_\omega^{N+1}(p)$}};
    %\draw[rotate=-120] (-0.3,-0.95);%\node at (-0.75,0.8) {\scriptsize{$\KM_\omega^{N}(p)$}};

    %bold map
    %\draw[rotate=-120,white,ultra thick] (-0.5,0.85) -- (-0.26,-0.95);
    %\draw[rotate=-120,dotted, thick, postaction={decorate}] (-0.26,-0.95) -- (-0.5,0.85);

    %arrows
    %\draw[shift={(0.05,0.41)},rotate=240,->] (0,0.04) -- (0,0.05);

    %old triangle
    %\draw[-] (-1.85,1) -- (1.85,1);
    %\draw[-] (-1.85,1) -- (0,-1.85);
    %\draw[-] (1.85,1) -- (0,-1.85);

    %auxiliary lines
    %\draw[-] (-2.5,1.37) -- (0.88,-0.49);
    %\draw[-] (2.5,1.37) -- (-0.88,-0.49);
    %\draw[-] (0,1) -- (0,-2.7);

    %eta=0
    \draw (0,-2.85) circle (0.1pt) node[anchor=north] {\scriptsize{$\frac{\mathrm{Q}_{\alpha(\tilde{N}_k+2)}}{v}$}};
    \draw[rotate=240] (0,-2.85) circle (0.1pt) node[anchor=east] {\scriptsize{$\frac{\mathrm{Q}_{\alpha(\tilde{N}_k+1)}}{v}$}};
    \draw[rotate=120]  (0,-2.85) circle (0.1pt) node[anchor=west] {\scriptsize{$\frac{\mathrm{Q}_{\alpha(\tilde{N}_k)}}{v}$}};

    %Q points
    \draw (0,-0.95) -- (0,-1.05);% node[anchor= north] {\scriptsize{$\mathrm{Q}_1$}};
    \draw[rotate=120] (0,-0.95) -- (0,-1.05);% node[anchor= west] {\scriptsize{$\mathrm{Q}_3$}};
    \draw[rotate=-120] (0,-0.95) -- (0,-1.05);% node[anchor= east] {\scriptsize{$\mathrm{Q}_2$}};

    %labels of intersections
    %\node at (0,1.2) {\scriptsize{$A_\alpha\cap A_\beta$}};
    %\node at (-1.4,-0.57) {\scriptsize{$A_1\cap A_2$}};
    \node at (2.1,-0.57) {\scriptsize{$A_{\alpha(\tilde{N}_{k})} \cap A_{\alpha(\tilde{N}_{k}+2)}$}};
    
\end{tikzpicture}
    \addtocounter{subfigure}{-1}\captionof{subfigure}{\footnotesize{The set $\mc{K}_\omega^{\tilde{N}_{k}+2}(U)$ contains the image of the (blue) set without the (orange and bold) sets, and in turn contain $A_{\alpha(\tilde{N}_{k})} \cap A_{\alpha(\tilde{N}_{k}+2)}$. }}
\end{subfigure}
\endminipage
\captionof{figure}{In case $\tilde{N}_{k+1}=\tilde{N}_{k}+1$, the next iterate $\tilde{N}_{k}+2$ contains an overlap region.
}
\end{figure}
    \end{enumerate}

    We can now finish the proof of item $(ii)$, since we have proved that there exists a symbol $\omega \in \Sigma$ and $M\in \mb{N}_0$ such that $\mc{K}_\omega^M(U)$ is open, connected and $\mc{K}_\omega^M(U)\cap \partial \mathbf{A}$ contains at least 2 points. There are two cases. First, if $\KM_\omega^M(U) \supseteq A_\alpha \cap A_\beta$ for some $\alpha \neq \beta$, then the proof is complete. Second, if $\KM_\omega^M(U) \not \supseteq A_\alpha \cap A_\beta$ for all $\alpha \neq \beta$, thus $\KM_\omega^M(U)\supseteq A_\alpha \setminus \mathbf{A}$ for some $\alpha$. In this case, there exists a symbol $\omega\in \Sigma$ such that $\KM_\omega^{M+1}(U) \supseteq \KM_{\alpha}(A_\alpha \setminus \mathbf{A}) \supseteq A_\beta \cap A_\gamma $ where $\alpha,\beta,\gamma \in \{1,2,3\}$ are pairwise distinct and $\KM_{\alpha}$ is given similar to \eqref{KM_angle}.
    
    %If $A_\alpha \cap A_\beta \not \subset \mc{K}_\omega^n(U)$ for some $\alpha \neq \beta$, choose $m=n$.
    
    %If $A_\alpha \cap A_\beta \not \subset \mc{K}_\omega^n(U)$ for some $\alpha \neq \beta$, then $A_\alpha \setminus \overline{M}\subset \mc{K}_\omega^n(U)$ for some $\alpha$ so $A_\beta \cap A_\gamma \subset \mc{K}_{\omega_{n+1}}(A_\alpha \setminus \overline{M})\subset \mc{K}_\omega^{n+1}(U)$ where $\alpha,\beta,\gamma \in \{1,2,3\}$ are pairwise distinct. Thus $m=n+1$.
    
%

\end{proof}

As a consequence of Lemma \ref{lem:intOVERLAP}, item (i), the set $\tilde{F}$ defined in \eqref{defofFtilde} is dense in the Kasner circle. Next, we prove the chaotic properties of the multi-valued Kasner map for $v\in (0,1/2)$.
%\begin{cor}\label{cor:denseF}
%    The set $\tilde{F}$ defined in \eqref{defofFtilde} is dense in the Kasner circle.
%\end{cor}
%Let $p$ be a point and $U$ an open subset of the Kasner circle such that $p\in U$, so for any $\omega\in \Sigma$ it holds that $\mathcal{K}_{\omega}^n(U)\cap I\neq \emptyset$ for some $n\ge 0$, and $\mathcal{K}_{\omega}^n(U)\subset \mathcal{K}^n(U)$ leads to $\mathcal{K}^n(U)\cap I\neq \emptyset$, which means that there exists $p'\in U$ such that $\mathcal{K}^n(\{p'\})\cap I\neq \emptyset$ as desiblack.
    
%----------------------------------------------------
\subsubsection*{Sensitivity to initial conditions}
%----------------------------------------------------
    
%\begin{cor}\label{cor:sensIC}
%    There is sensitivity to initial conditions for every $v\in \left(0,\frac{1}{2}\right)$.
%\end{cor}

%\begin{proof}[Proof of Corollary \ref{cor:sensIC}]
    Without loss of generality, consider any point $p\in \KC$ and an open connected neighborhood $U\subset \KC$ of $p$. Lemma \ref{lem:intOVERLAP}, item (i), implies that for any $\omega\in \Sigma$, there exists a minimal $N\in \mathbb{N}$ such that $\KM_\omega^N(U)\cap \mathbf{A}\neq \emptyset$.
    Thus, there is a point $q\in U$ such that $\KM_\omega^N(q)\in A_\alpha\cap A_\beta$ and $\KM_\omega^N(p)\in A_\alpha\setminus A_\gamma$ for some $\alpha,\beta,\gamma\in \{1,2,3\},$ which are pairwise different.
    
    We then choose two different symbols $\omega,\omega^*\in \Sigma$ which only coincide at the first $N$ iterates, $\omega_n=\omega^*_n$ for $n=1,\ldots, N$, but differ at the next iterate such that the appropriate symbols within the triplets $\omega_{N+1}=(\mu_{N+1},\nu_{N+1},\zeta_{N+1})$ and $\omega^*_{N+1}=(\mu^*_{N+1},\nu^*_{N+1},\zeta^*_{N+1})$ corresponding to the intersection $A_\alpha\cap A_\beta$ attain different values: $\alpha$ or $\beta$. In other words, the $(N+1)$ iterate is given by
    \begin{equation}
        \KM_\omega^{N+1}(p)=\mc{K}_\alpha\left(\KM_\omega^N(p)\right) \quad \text{ and } \quad \KM_{\omega^*}^{N+1}(q)=\mc{K}_\beta\left(\KM_{\omega^*}^N(q)\right),
    \end{equation}
    where $\KM_\alpha$ and $\KM_\beta$ are the appropriate maps defined in the arcs $A_\alpha$ and $A_\beta$, as in \eqref{KM_angle}. See Figure \ref{fig:sens}. Therefore, $d\left(\KM^{N+1}_{\omega}(p),\KM^{N+1}_{\omega^*}(q)\right)>\delta$, where $\delta:=|\mathbf{A}|/3$.\footnote{Note that $\delta$ increases as the parameter $v\in (0,1/2)$ decreases. This indicates that as $v\in (0,1/2)$ decreases, the system has more sensitivity to initial data.}

\begin{figure}[H]\centering
    \begin{tikzpicture}[scale=1.1]

    %map
    \draw[rotate=-120,color=gray,dotted] (-0.5,0.85) -- (0,-2.85);
    \draw[rotate=-120] (-0.5,0.85);
    \draw[rotate=-120] (-0.3,-0.95);%\node at (-0.75,0.8) {\scriptsize{$\KM_\omega^{N}(p)$}};
    
    %map2
    \draw[rotate=120,color=gray,dotted] (0.985,0.1) -- (0,-2.85); 
    %\draw[rotate=-120] (-0.5,0.85);\node at (1.6,0) {\scriptsize{$\KM_\omega^{N+1}(p)$}};
    %\draw[rotate=-120] (-0.3,-0.95);%\node at (-0.75,0.8) {\scriptsize{$\KM_\omega^{N}(p)$}};    

    %removed map and map2
    \draw[rotate=-120,white,ultra thick] (-0.5,0.85) -- (-0.26,-0.95);
    \draw[rotate=120,white,ultra thick] (0.985,0.1) -- (0.72,-0.7);
    
    %cyan image
    \draw [color=cyan,line width=3pt, domain=-0.9:0.26,variable=\t,smooth] plot ({0.98*sin(\t r)},{0.98*cos(\t r)});

    %bold map and map2
    \draw[rotate=-120,dotted, thick, postaction={decorate}] (-0.26,-0.95) -- (-0.5,0.85);
    \draw[rotate=120,dotted, thick, postaction={decorate}] (0.72,-0.7) -- (0.985,0.1);
    
    %labels
    \node at (0.75,1.05) {\scriptsize{$\KM_{\omega^*}^{N}(q)$}};
    \node at (-0.98,0.94) {\scriptsize{$\KM_{\omega}^{N}(p)$}};
    
    %arrows
    %\draw[shift={(0.05,0.41)},rotate=240,->] (0,0.04) -- (0,0.05);

    %old triangle
    %\draw[-] (-1.85,1) -- (1.85,1);
    %\draw[-] (-1.85,1) -- (0,-1.85);
    %\draw[-] (1.85,1) -- (0,-1.85);

    %auxiliary lines
    %\draw[-] (-2.5,1.37) -- (0.88,-0.49);
    %\draw[-] (2.5,1.37) -- (-0.88,-0.49);cyan
    %\draw[-] (0,1) -- (0,-2.7);

    %new triangle
    \draw[color=gray,dashed] (0,-2.85) -- (0.98,-0.29);
    \draw[color=gray,dashed] (0,-2.85) -- (-0.98,-0.29);

    \draw[color=gray,rotate=120,dashed] (0,-2.85) -- (0.98,-0.29);
    \draw[color=gray,rotate=120,dashed] (0,-2.85) -- (-0.98,-0.29);

    \draw[color=gray,rotate=240,dashed] (0,-2.85) -- (0.98,-0.29);
    \draw[color=gray,rotate=240,dashed] (0,-2.85) -- (-0.98,-0.29);

    \filldraw[gray] (0,-2.85) circle (0.1pt);% node[anchor=north] {$\frac{\mathrm{Q}_\alpha}{v}$};
    
    %circle
    \draw [line width=0.1pt,domain=0:6.28,variable=\t,smooth] plot ({sin(\t r)},{cos(\t r)});

    %thick removed sets
    \draw [very thick, domain=-0.29:0.29,variable=\t,smooth] plot ({0.975*sin(\t r)},{0.975*cos(\t r)});
    \draw [rotate=120,very thick, domain=-0.29:0.29,variable=\t,smooth] plot ({0.975*sin(\t r)},{0.975*cos(\t r)});
    \draw [rotate=-120,very thick, domain=-0.29:0.29,variable=\t,smooth] plot ({0.975*sin(\t r)},{0.975*cos(\t r)});

    %eta=0
    \draw (0,-2.85) circle (0.1pt) node[anchor=north] {$\frac{\mathrm{Q}_\gamma}{v}$};
    \draw[rotate=240] (0,-2.85) circle (0.1pt) node[anchor=east] {$\frac{\mathrm{Q}_\alpha}{v}$};
    \draw[rotate=120]  (0,-2.85) circle (0.1pt) node[anchor=west] {$\frac{\mathrm{Q}_\beta}{v}$};

    %Q points
    \draw (0,-0.95) -- (0,-1.05);% node[anchor= north] {\scriptsize{$\mathrm{Q}_1$}};
    \draw[rotate=120] (0,-0.95) -- (0,-1.05);% node[anchor= west] {\scriptsize{$\mathrm{Q}_3$}};
    \draw[rotate=-120] (0,-0.95) -- (0,-1.05);% node[anchor= east] {\scriptsize{$\mathrm{Q}_2$}};

    %labels of intersections
    %\node[color=black] at (-0.1,1.3) {\scriptsize{$\KM^N_\omega(U)$}};
    %\node at (-1.4,-0.57) {\scriptsize{$A_1\cap A_2$}};
    %\node at (1.4,-0.57) {\scriptsize{$A_1\cap A_3$}};
\end{tikzpicture}
\captionof{figure}{After the (cyan) set $\KM^N_\omega(U)$ intersects the (bold) overlap region $A_\alpha \cap A_\beta$, we choose different symbols $\omega,\omega^*$ such that the corresponding further images of $\KM_\omega^N(p)\in A_\alpha\setminus A_\gamma$ and $\KM_{\omega^*}^N(q)\in A_\alpha\cap A_\beta$ are $\delta:=\mathbf{A}/3$ apart. Note that $\KM_\omega^N(p)$ can be either in $A_\alpha \cap A_\beta$ or $A_\alpha\setminus \mathbf{A}$.%, but in either case, its image given by $\KM_\omega^{N+1}(p)$ has distance at least $\delta$ to the point $\KM_{\omega^*}^N(q)$. 
}\label{fig:sens}
\end{figure}
%

    %Let $\delta:=|\mathbf{A}|/3$.  If $\mc{K}_s^n(p)\in A_i\cap A_j$, since $\mc{K}_i$ and $\mc{K}_j$ are continuous and $\mc{K}_s^n(U)$ is connected, then we may choose $q$ such that $\mc{K}_{\omega'}^{n+1}(q)\in A_i\setminus M$ thus $d\left(\mc{K}_{\omega}(p),\mc{K}_{\omega'}(q)\right)>\delta$.  Othrwise, since $\mc{K}_\omega^n(U)$ is connected, we can choose $q$ close to $\partial \left(A_i\cap A_j\right)\cap A_i$ so that $\mc{K}_{\omega'}^{n+1}(q)=\mc{K}_j\left(\mc{K}_s^n(q)\right)$ is arbitrarily close to $\mc{K}_s^n(q)$, hence $d\left(\mc{K}_\omega^{n+1}(p),\mc{K}_{\omega'}^{n+1}(q)\right)>\delta$ for well chosen $q$.
%\end{proof}

%\begin{lem}\label{lem:coversArc}
%    For every open and connected set $U\subset \mb{S}^1$ there exist $\omega \in \Sigma$ and $m\in \mb{N}_0$ such that    $$A_\alpha \cap A_\beta\subset \mc{K}_\omega^m(U)\text{ for some }\alpha,\beta \in \{1,2,3\},\ \alpha \neq \beta$$
%\end{lem}

%----------------------------------------------------
\subsubsection*{Topological Transitivity}
%----------------------------------------------------

    Consider open sets $U,V\subseteq \KC$. Without loss of generality, $U$ is connected. 
    We claim that for any $\alpha,\beta \in \{1,2,3\},\ \alpha \neq \beta$, there are $\omega_* \in \Sigma$ and $N_*\in \mb{N}_0$ such that $\KM_{\omega_*}^{N_*}(A_\alpha \cap A_\beta)\cap V\neq \emptyset$.
    Topological transitivity follows directly from such claim, since due to Lemma \ref{lem:intOVERLAP}, item $(ii)$, there are $\omega \in \Sigma,\ M\in \mb{N}_0$ such that $\KM^M_\omega(U) \supseteq A_\alpha \cap A_\beta$ for some $\alpha, \beta \in \{1,2,3\}, \alpha\neq \beta$, and thus
    \begin{equation}
        \mc{K}_{\omega_*}^{N_*}\circ \mc{K}_{\omega}^M(U)\cap V\neq \emptyset.
    \end{equation}
    %since $A_\alpha \cap A_\beta \subset \mc{K}_\omega^M(U)$.
%
    %\begin{proof}[Proof of Theorem \ref{thm:transitivity}]
    %We claim that for every $\alpha,\beta \in \{1,2,3\},\ \alpha \neq \beta$ and every open set $V\subset \mb{S}^1$ there exist $s \in \Sigma$ and $n\in \mb{N}_0$ such that $$\mc{K}_\omega^n(I_\gamma)\cap V\neq \emptyset$$    where $I_\gamma:=A_\alpha \cap A_\beta$ if $\gamma \in \{1,2,3\}\setminus \{\alpha,\beta\}$.\\
%    
    Let us now prove the claim.
    For the sake of notation, we suppose $\alpha=2,\beta=3$, as the other cases are analogous. Consider the symbols $\omega'=(\mu'_k,\nu'_k,\zeta'_k)_{k\in\mathbb{N}_0},\ \omega''=(\mu''_k,\nu''_k,\zeta''_k)_{k\in\mathbb{N}_0}\in \Sigma$ which are periodic in the the second coordinate according to
    \begin{equation}\label{symbols}
    \nu'_n=\begin{cases}2\text{, if }n\text{ is odd}\\
    3\text{, if }n\text{ is even}\end{cases} \quad \text{ and } \quad
    \nu''_n=\begin{cases}3\text{, if }n\text{ is odd}\\
    2\text{, if }n\text{ is even.}\end{cases}
    \end{equation}
    There exists an $m\in \mathbb{N}_0$ such that the union of iterates $\mc{K}^k_{\omega'}(A_2 \cap A_3)$ and $\mc{K}^k_{\omega''}(A_2 \cap A_3)$ for $k=1,\ldots,m$ covers the arcs $A_2$ and $A_3$. Moreover, the next iterate of these sets will cover $A_1\setminus \mathbf{A}$. Consequently the union of all iterates $k=1,\ldots,m+1$ of the set $A_2 \cap A_3$ under the maps $\mc{K}_{\omega'}$ and $\mc{K}_{\omega''}$ will cover the Kasner circle. See Figure \ref{fig:trans}. Consequently, there is an $N_*\in\mathbb{N}_0$ such that $N_*\leq m+1$ and $\mc{K}_{\omega'}^{N_*}(A_2 \cap A_3)\cap V\neq \emptyset$ for any set $V\subseteq \KC$.
    
    %Hence $\mc{K}_{\omega'}(A_1 \cap A_2)\subset A_2\setminus \mathbf{A}$ and $\mc{K}_{\omega''}(A_1 \cap A_2)\subset A_1\setminus \mathbf{A}$, and thereby these sets now admit uniform expansions by respectively applying the maps within the symbols $\omega'$ and $\omega''$. Thus, after iteratively applying the maps in the symbols $\omega'$ and $\omega''$, we obtain that $|\mc{K}_{\omega'}^k(A_1 \cap A_2)|>|A_1 \cap A_2|$ and $|\mc{K}_{\omega''}^k(A_1 \cap A_2)|>|A_1 \cap A_2|$ for every $k=1,\ldots, m$, where $m\in \mb{N}_0$ is defined such that
    %\begin{equation}
    %    \mc{K}_{\omega'}^m(A_1 \cap A_2)\cap (A_3 \cap A_\alpha ) \neq \emptyset, \quad \text{ and } \quad \mc{K}_{\omega''}^m(A_1 \cap A_2)\cap (A_3 \cap A_\beta) \neq \emptyset  ,      
    %\end{equation}
    %for distinct $\alpha,\beta\in \{1,2\}$, due to symmetry. Without loss of generality, we suppose that $\alpha=1,\beta=2$, as one can interchange the role of $\omega',\omega''$ for the other case.

%
\begin{figure}[H]
\minipage[b]{0.5\textwidth}\centering
\begin{subfigure}\centering
    \begin{tikzpicture}[scale=1.1]

    %map
    \draw[rotate=120,color=gray,dotted] (0.975,0.25) -- (0,-2.85); 
    \node at (1.6,0.85) {\scriptsize{$\KM_{\omega'}(A_2\cap A_3)$}};
    \draw[rotate=-120,color=gray,dotted] (-0.975,0.25) -- (0,-2.85); 
    \node at (-1.6,0.85) {\scriptsize{$\KM_{\omega''}(A_2\cap A_3)$}};
    
    %partial map removal
    \draw[rotate=120,white,ultra thick] (0.975,0.25) -- (0.67,-0.7);
    \draw[rotate=-120,white,ultra thick] (-0.975,0.25) -- (-0.67,-0.7);
        
    %cyan set
    \draw [color=cyan,line width=1.8pt, domain=0.29:0.78,variable=\t,smooth] plot ({0.98*sin(\t r)},{0.98*cos(\t r)});
    \draw [color=cyan,line width=1.8pt, domain=-0.29:-0.78,variable=\t,smooth] plot ({0.98*sin(\t r)},{0.98*cos(\t r)});

    %final map    
    \draw[rotate=120,dotted, thick, postaction={decorate}] (0.67,-0.7) -- (0.975,0.25);
    \draw[rotate=-120,dotted, thick, postaction={decorate}] (-0.67,-0.7) -- (-0.975,0.25);

    %circle
    \draw [line width=0.1pt,domain=0:6.28,variable=\t,smooth] plot ({sin(\t r)},{cos(\t r)});

    %thick removed sets
    \draw [very thick, domain=-0.29:0.29,variable=\t,smooth] plot ({0.975*sin(\t r)},{0.975*cos(\t r)});
    \draw [rotate=120,very thick, domain=-0.29:0.29,variable=\t,smooth] plot ({0.975*sin(\t r)},{0.975*cos(\t r)});
    \draw [rotate=-120,very thick, domain=-0.29:0.29,variable=\t,smooth] plot ({0.975*sin(\t r)},{0.975*cos(\t r)});

    %new triangle
    \draw[color=gray,dashed] (0,-2.85) -- (0.98,-0.29);
    \draw[color=gray,dashed] (0,-2.85) -- (-0.98,-0.29);

    \draw[color=gray,rotate=120,dashed] (0,-2.85) -- (0.98,-0.29);
    \draw[color=gray,rotate=120,dashed] (0,-2.85) -- (-0.98,-0.29);

    \draw[color=gray,rotate=240,dashed] (0,-2.85) -- (0.98,-0.29);
    \draw[color=gray,rotate=240,dashed] (0,-2.85) -- (-0.98,-0.29);

    \filldraw[gray] (0,-2.85) circle (0.1pt);% node[anchor=north] {$\frac{\mathrm{Q}_\alpha}{v}$};

    %map
    %\draw[color=gray,dotted,thick] (0,1) -- (0,-2.85);
    %\draw[color=gray,dotted,thick] (0.5,0.85) -- (0,-2.85);
    %\draw[rotate=-120,color=gray,dotted] (-0.5,0.85) -- (0,-2.85);
    %\draw[rotate=-120] (-0.5,0.85);%\node at (1.6,0) {\scriptsize{$\KM_\omega^{N+1}(p)$}};
    %\draw[rotate=-120] (-0.3,-0.95);%\node at (-0.75,0.8) {\scriptsize{$\KM_\omega^{N}(p)$}};

    %bold map
    %\draw[rotate=-120,white,ultra thick] (-0.5,0.85) -- (-0.26,-0.95);
    %\draw[rotate=-120,dotted, thick, postaction={decorate}] (-0.26,-0.95) -- (-0.5,0.85);

    %arrows
    %\draw[shift={(0.05,0.41)},rotate=240,->] (0,0.04) -- (0,0.05);

    %old triangle
    %\draw[-] (-1.85,1) -- (1.85,1);
    %\draw[-] (-1.85,1) -- (0,-1.85);
    %\draw[-] (1.85,1) -- (0,-1.85);

    %auxiliary lines
    %\draw[-] (-2.5,1.37) -- (0.88,-0.49);
    %\draw[-] (2.5,1.37) -- (-0.88,-0.49);
    %\draw[-] (0,1) -- (0,-2.7);

    %eta=0
    \draw (0,-2.85) circle (0.1pt) node[anchor=north] {$\frac{\mathrm{Q}_1}{v}$};
    \draw[rotate=240] (0,-2.85) circle (0.1pt) node[anchor=east] {$\frac{\mathrm{Q}_2}{v}$};
    \draw[rotate=120]  (0,-2.85) circle (0.1pt) node[anchor=west] {$\frac{\mathrm{Q}_3}{v}$};

    %Q points
    \draw (0,-0.95) -- (0,-1.05);% node[anchor= north] {\scriptsize{$\mathrm{Q}_1$}};
    \draw[rotate=120] (0,-0.95) -- (0,-1.05);% node[anchor= west] {\scriptsize{$\mathrm{Q}_3$}};
    \draw[rotate=-120] (0,-0.95) -- (0,-1.05);% node[anchor= east] {\scriptsize{$\mathrm{Q}_2$}};

    %labels of intersections
    \node at (0,1.2) {\scriptsize{$A_2\cap A_3$}};
    %\node at (-1.4,-0.57) {\scriptsize{$A_1\cap A_2$}};
    %\node at (1.4,-0.57) {\scriptsize{$A_1\cap A_3$}};
\end{tikzpicture}
    \addtocounter{subfigure}{-1}\captionof{subfigure}{\footnotesize{First iterate using the words $\omega'$ and $\omega''$.}}
\end{subfigure}
\endminipage\hfill
\minipage[b]{0.5\textwidth}\centering
\begin{subfigure}\centering
    \begin{tikzpicture}[scale=1.1]

    %%%%%%%%%%%%%%%%%%%%%%%%%%%%
    %iterates
    %%%%%%%%%%%%%%%%%%%%%%%%%%%%
    
    %map
    \draw[rotate=120,color=gray,dotted] (0.975,0.25) -- (0,-2.85); 
    %\node at (1.6,0) {\scriptsize{$\KM_\omega^{N+1}(p)$}};
    \draw[rotate=-120,color=gray,dotted] (-0.975,0.25) -- (0,-2.85); 
    %\node at (1.6,0) {\scriptsize{$\KM_\omega^{N+1}(p)$}};
    
    %partial map removal
    \draw[rotate=120,white,ultra thick] (0.975,0.25) -- (0.67,-0.7);
    \draw[rotate=-120,white,ultra thick] (-0.975,0.25) -- (-0.67,-0.7);

    %map2
    \draw[rotate=120,color=gray,dotted] (0.51,0.87) -- (0,-2.85); 
    %\node at (1.6,0) {\scriptsize{$\KM_\omega^{N+1}(p)$}};
    \draw[rotate=-120,color=gray,dotted] (-0.51,0.87) -- (0,-2.85); 
    %\node at (1.6,0) {\scriptsize{$\KM_\omega^{N+1}(p)$}};
    
    %partial map2 removal
    \draw[rotate=120,white,ultra thick] (0.26,-0.97) -- (0.51,0.87);
    \draw[rotate=-120,white,ultra thick] (-0.26,-0.97) -- (-0.51,0.87);
                
    %cyan set
    \draw [color=cyan,line width=1.8pt, domain=0.29:0.78,variable=\t,smooth] plot ({0.98*sin(\t r)},{0.98*cos(\t r)});
    \draw [color=cyan,line width=1.8pt, domain=-0.29:-0.78,variable=\t,smooth] plot ({0.98*sin(\t r)},{0.98*cos(\t r)});

    %blue set
    \draw [color=blue,line width=1.8pt, domain=-0.78:-1.56,variable=\t,smooth] plot ({0.98*sin(\t r)},{0.98*cos(\t r)});\node at (-1.8,0.5) {\scriptsize{$\KM_{\omega'}^2(A_2\cap A_3)$}};
    \draw [color=blue,line width=1.8pt, domain=0.78:1.56,variable=\t,smooth] plot ({0.98*sin(\t r)},{0.98*cos(\t r)});\node at (1.8,0.5) {\scriptsize{$\KM_{\omega''}^2(A_2\cap A_3)$}};

    %final map    
    \draw[rotate=120,dotted, thick, postaction={decorate}] (0.67,-0.7) -- (0.975,0.25);
    \draw[rotate=-120,dotted, thick, postaction={decorate}] (-0.67,-0.7) -- (-0.975,0.25);
    
    %final map2    
    \draw[rotate=120,dotted, thick, postaction={decorate}] (0.26,-0.97) -- (0.51,0.87);
    \draw[rotate=-120,dotted, thick, postaction={decorate}] (-0.26,-0.97) -- (-0.51,0.87);

    %%%%%%%%%%%%%%%%%%%%%%%%%%%%
    %other stuff
    %%%%%%%%%%%%%%%%%%%%%%%%%%%%

    %circle
    \draw [line width=0.1pt,domain=0:6.28,variable=\t,smooth] plot ({sin(\t r)},{cos(\t r)});

    %thick removed sets
    \draw [very thick, domain=-0.29:0.29,variable=\t,smooth] plot ({0.975*sin(\t r)},{0.975*cos(\t r)});
    \draw [rotate=120,very thick, domain=-0.29:0.29,variable=\t,smooth] plot ({0.975*sin(\t r)},{0.975*cos(\t r)});
    \draw [rotate=-120,very thick, domain=-0.29:0.29,variable=\t,smooth] plot ({0.975*sin(\t r)},{0.975*cos(\t r)});

    %new triangle
    \draw[color=gray,dashed] (0,-2.85) -- (0.98,-0.29);
    \draw[color=gray,dashed] (0,-2.85) -- (-0.98,-0.29);

    \draw[color=gray,rotate=120,dashed] (0,-2.85) -- (0.98,-0.29);
    \draw[color=gray,rotate=120,dashed] (0,-2.85) -- (-0.98,-0.29);

    \draw[color=gray,rotate=240,dashed] (0,-2.85) -- (0.98,-0.29);
    \draw[color=gray,rotate=240,dashed] (0,-2.85) -- (-0.98,-0.29);

    \filldraw[gray] (0,-2.85) circle (0.1pt);% node[anchor=north] {$\frac{\mathrm{Q}_\alpha}{v}$};

    %map
    %\draw[color=gray,dotted,thick] (0,1) -- (0,-2.85);
    %\draw[color=gray,dotted,thick] (0.5,0.85) -- (0,-2.85);
    %\draw[rotate=-120,color=gray,dotted] (-0.5,0.85) -- (0,-2.85);
    %\draw[rotate=-120] (-0.5,0.85);%\node at (1.6,0) {\scriptsize{$\KM_\omega^{N+1}(p)$}};
    %\draw[rotate=-120] (-0.3,-0.95);%\node at (-0.75,0.8) {\scriptsize{$\KM_\omega^{N}(p)$}};

    %bold map
    %\draw[rotate=-120,white,ultra thick] (-0.5,0.85) -- (-0.26,-0.95);
    %\draw[rotate=-120,dotted, thick, postaction={decorate}] (-0.26,-0.95) -- (-0.5,0.85);

    %arrows
    %\draw[shift={(0.05,0.41)},rotate=240,->] (0,0.04) -- (0,0.05);

    %old triangle
    %\draw[-] (-1.85,1) -- (1.85,1);
    %\draw[-] (-1.85,1) -- (0,-1.85);
    %\draw[-] (1.85,1) -- (0,-1.85);

    %auxiliary lines
    %\draw[-] (-2.5,1.37) -- (0.88,-0.49);
    %\draw[-] (2.5,1.37) -- (-0.88,-0.49);
    %\draw[-] (0,1) -- (0,-2.7);

    %eta=0
    \draw (0,-2.85) circle (0.1pt) node[anchor=north] {$\frac{\mathrm{Q}_1}{v}$};
    \draw[rotate=240] (0,-2.85) circle (0.1pt) node[anchor=east] {$\frac{\mathrm{Q}_2}{v}$};
    \draw[rotate=120]  (0,-2.85) circle (0.1pt) node[anchor=west] {$\frac{\mathrm{Q}_3}{v}$};

    %Q points
    \draw (0,-0.95) -- (0,-1.05);% node[anchor= north] {\scriptsize{$\mathrm{Q}_1$}};
    \draw[rotate=120] (0,-0.95) -- (0,-1.05);% node[anchor= west] {\scriptsize{$\mathrm{Q}_3$}};
    \draw[rotate=-120] (0,-0.95) -- (0,-1.05);% node[anchor= east] {\scriptsize{$\mathrm{Q}_2$}};

    %labels of intersections
    \node at (0,1.2) {\scriptsize{$A_2\cap A_3$}};
    %\node at (-1.4,-0.57) {\scriptsize{$A_1\cap A_2$}};
    %\node at (1.4,-0.57) {\scriptsize{$A_1\cap A_3$}};
\end{tikzpicture}
    \addtocounter{subfigure}{-1}\captionof{subfigure}{\footnotesize{Second iterate using the words $\omega'$ and $\omega''$.}}
\end{subfigure}
\endminipage
\captionof{figure}{
The union of iterates of $A_1\cap A_2$ by the maps $\mc{K}_{\omega'}$ and $\mc{K}_{\omega''}$ will eventually cover the circle.
}\label{fig:trans}
\end{figure}
\subsubsection*{Density of Periodic orbits}
%----------------------------------------------------

%\begin{thm}\label{thm:denseorbits}
%    Periodic orbits are dense for every $v\in \left(0,\frac{1}{2}\right)$.
%\end{thm}

%\begin{proof}[Proof of Theorem \ref{thm:denseorbits}]
    Given any point $p\in \KC$ and an open neighborhood $U\subseteq \KC$ of $p$, we will show that for sufficiently small $\varepsilon>0$ there exist a closed neighborhood ${N}_\varepsilon \subseteq U$ %such that %$\ |{N}_\varepsilon|=\varepsilon$ 
    and a contraction in ${N}_\varepsilon$, which is constructed by %concatenations of different 
    pre-images of maps $\mc{K}_{\mu \nu \zeta}$ in \eqref{Kmnzdef}. Hence Banach's fixed point theorem implies that there is a periodic point in ${N}_\varepsilon$ and thereby also in $U$.

    Recall the construction in the proof of topological transitivity, which guarantees that there are symbols $\omega',\omega''\in\Sigma$ such that the union of iterates $\mc{K}^k_{\omega'}(A_2 \cap A_3)$ and $\mc{K}^k_{\omega''}(A_2 \cap A_3)$ for $k=1,\ldots,m+1$ covers the Kasner circle, see Figure \ref{fig:trans}. We can repeat this procedure for the other overlap regions. Indeed, consider the symbols $\tilde{\omega},\tilde{\tilde{\omega}}\in \Sigma$ which are periodic in %the first coordinate 
    $\tilde{\mu}_n,\tilde{\tilde{\mu}}_n$ with alternating numbers $1,2$, similar to \eqref{symbols}. Therefore, the union of iterates $\mc{K}^k_{\tilde{\omega}}(A_1 \cap A_2)$ and $\mc{K}^k_{\tilde{\tilde{\omega}}}(A_1 \cap A_2)$ for $k=1,\ldots,m+1$ covers the Kasner circle.
    %Note the same number of iterates are necessary to cover the circle, due to symmetry.
    Similarly, consider the symbols $\dot{\omega},\ddot{\omega}\in \Sigma$ which are periodic in %the third coordinate 
    $\dot{\zeta}_n,\ddot{\zeta}_n$ with alternating numbers $1,3$, and thus the union of $\mc{K}^k_{\dot{\omega}}(A_1 \cap A_3)$ and $\mc{K}^k_{\ddot{\omega}}(A_1 \cap A_3)$ for $k=1,\ldots,m+1$ covers the circle.
    
    For sufficiently small $\epsilon>0$, we choose a closed neighborhood $N_\epsilon\subseteq U$ which is strictly contained in only one set of the following family:
    \begin{align}\label{family}
    \left\{ \mc{K}^k_{{\omega_1}}(A_\alpha \cap A_\beta)\cap \mc{K}^n_{{\omega_2}}(A_\alpha \cap A_\gamma) \,\, \Big|  
    \begin{array}{cc}
        (\alpha\beta\gamma) \text{ is a permutation of } (123)\\
        \omega_1,\omega_2\in \{\omega',\omega'',\tilde{\omega},\tilde{\tilde{\omega}},\dot{\omega},\ddot{\omega}\}  \\
        k,n\in \{1,\ldots,m+1\}
    \end{array} \right\}.
    \end{align}
    
    Hence, for any distinct $\alpha,\beta \in \{1,2,3\}$, there are $N^*\leq m+1$ and $\omega_*\in \{\omega',\omega'',\tilde{\omega},\tilde{\tilde{\omega}},\dot{\omega},\ddot{\omega}\}$ such that $N_\epsilon\subseteq \mc{K}^{N^*}_{{\omega_*}}(A_\alpha \cap A_\beta)$.
    Moreover, due to Lemma \ref{lem:intOVERLAP}, item $(ii)$, there are $\omega \in \Sigma, M\in \mb{N}_0$ such that $\KM^M_\omega(N_\epsilon) \supseteq A_\alpha \cap A_\beta$ for some distinct $\alpha, \beta \in \{1,2,3\}$. Therefore, choosing the appropriate pre-images yields the following contraction,
    \begin{equation} \label{contract}
        \KM^{-M}_\omega \circ \KM^{-N^*}_{\omega_*}: N_\epsilon \to N_\epsilon,
    \end{equation}
    and thereby we obtain a fixed point. 
    Note that less than $N^*+M$ maps in the concatenation \eqref{contract} are non-uniform contractions, which implies that the whole concatenation is a contraction.
    Therefore, there is a periodic orbit with period $N_*+M$ and symbol which repeats periodically $(w^*_{1}\ldots w^*_{N^*} w_1\ldots w_M)$.

{
\color{black}

\section{Conclusion}\label{sec:disc}

We give an overview on our proof methodology, some consequences, alternative approaches and possible future directions. We also compare our results with the existing literature.

The proof in Section \ref{sec:pf} is based on the technical Lemma \ref{lem:intOVERLAP}, which shows that iterated images of an arbitrary open set of the Kasner circle eventually cover some connected component of the multi-valued set $\mathbf{A}$ that possess positive length for $v\in (0,1/2)$. 
This allows us to prove Theorem \ref{mainthm} using a constructive approach, due to the freedom to choose among different iterates once one component of the multi-valued set $\mathbf{A}$ is covered.
This readily yields sensitivity to initial conditions, since there are two possible immediate iterates after covering a component of $\mathbf{A}$, which are at least $|\mathbf{A}|/3$ apart.
To prove topological transitivity, we again construct two different iterates after a component of $\mathbf{A}$ is covered and compare their eventual images: their union cover the whole circle, and thus they also cover any particular set. 
The density of periodic orbits is obtained by considering appropriate pre-images of an arbitrary set until it is eventually contained in $\mathbf{A}$, and then concatenating with appropriate pre-images that eventually return to the initial arbitrary set. Such a concatenation yields a uniform contraction and thereby a fixed point by Banach's fixed point theorem. 

Note our proof relies on the specific definition of chaos realization for multi-valued maps in the introduction, which considers the three conditions of chaos for \emph{some} $\omega\in\Sigma$. 
One may wish to investigate a more rigid notion of chaos realization for the multi-valued map, such as requiring the three conditions to be satisfied for \emph{every} element in the space $\Sigma$, or \emph{almost every} symbol in $\Sigma$ with respect to the product measure of uniform distributions of each trajectory to be chosen at time $n$. However, this would probably require a non-constructive approach. 

Notice the multi-valued set $\mathbf{A}$ degenerates at the Taub points towards GR, i.e., $|\mathbf{A}|\to 0$ as $v\to 1/2$.
In particular, the technical Lemma \ref{lem:intOVERLAP} fails and thus our proof for the subcritical case, $v\in (0,1/2)$, does not provide a satisfactory approximation of generic dynamics for GR. This is in contrast with the case supercritical case, $v\in (1/2,1)$, that yields a limiting symbolic dynamics description for the generic dynamics of GR; see \cite[Appendix C]{HellLappicyUggla}.
In this regard, we believe that the subcritical regime may provide a suitable approximation of the non-generic dynamics in GR by unfolding the degenerate Taub points with the parameter $v$, as known rigorous results in GR fail nearby the Taub points, for example \cite{Liebscher,Beguin}. 
%for example, stefan's result
%Note that within our proof, the lower bound $\delta=\delta(v)$ for eventual distance between images, even though is enough for our proof to work for every $v\in (0,1/2)$, is such that $\delta(v)\to 0$ as $v\to 1/2$, which is pathological for the GR case $v=1/2$.
Similarly, the proof fails in the limit $v\to 0$, since the Kasner map becomes an isometry and there is no expansion, see \cite{LappicyLessard}.

Even though the overall dynamics of the Kasner circle map ${\cal K}$ is far from being understood, there are special features which can be compared with the supercritical case, $v\in (1/2,1)$. Consider the set $\tilde{C}$ of points in the Kasner circle $\mathrm{K}^{\ocircle}$
for which all iterates of ${\cal K}$ consist of
exactly one positive eigenvalue in the $N_\alpha$ variables%\footnote{Note that the set $C$ in~\eqref{defofC} for $v\in(1/2,1)$ can also be described by this formulation. Thus the properties of such a set (of points with exactly one positive eigenvalue for all iterates) depend on $v$.}
, see Figure%~\ref{FIG:BIF} and
~\ref{FIG:KASNERMAPS}, i.e.,
\begin{equation}\label{defofCtilde}
\tilde{C}:= \{ p\in \mathrm{K}^{\ocircle} \text{ $ | $ } \mathcal{K}^n(p)
\notin \mathrm{int}((A_1\cap A_2)\cup (A_1\cap A_3)\cup (A_2\cap A_3))  \text{ for all }  n \in \mathbb{N}_0 \}.
\end{equation}
This set is given by the points that never reach the overlaps
$\mathrm{int}(A_\alpha \cap A_\beta)$. The map $\mathcal{K}$ is thereby not multi-valued on the set $\tilde{C}$, and thus $\mathcal{K}$ is a well-defined map.
Furthermore, the set $\tilde{C}$ is not empty since, e.g., there are two
(physically equivalent) period 3 cycles, see \cite{HellLappicyUggla}.% given by Lemma~\ref{lem:Period3} and depicted in Figure~\ref{FIG:period3}, since the three vertices of each triangle do not lie in any of the overlap regions $A_\alpha\cap A_\beta$.

The complement of the set $\tilde{C}$ in $\mathrm{K}^{\ocircle}$, which thereby consists of points that eventually are in the multi-valued regime, is given by
\begin{equation}\label{defofFtilde}
\tilde{F}:= \{ p\in \mathrm{K}^{\ocircle} \text{ $ | $ }
\mathcal{K}^n(p) \in  \mathrm{int}((A_1\cap A_2)\cup (A_1\cap A_3)\cup (A_2\cap A_3)) \text{ for some } n \in \mathbb{N}_0\},
\end{equation}
%
%which thereby consists of points that are eventually in the multi-valued regime and thus $\mathcal{K}$ is not a well-defined map.
%
Splitting the dynamics in $\mathrm{K}^\ocircle$ into two
disjoint invariant sets, $\tilde{C}$ and $\tilde{F}$, is
a first step to understand the overall dynamics of the multi-valued map $\mathcal{K}$.
%The properties of the set $\tilde{C}$, and how they depend on $v$, are not clear: Is it a Cantor set or not? What is its Lebesgue measure and Hausdorff dimension? %Is the dynamics on $\tilde{C}$ chaotic?
In particular, as a consequence of the Lemma \ref{lem:intOVERLAP}, the set $\tilde{F}$ is dense and thereby the generic dynamics occurs in such a set, akin to the generic dynamics outside the Cantor set in the supercritical case, see \cite{HellLappicyUggla}.

Instead of defining the iterates of the Kasner map as a skew-product dynamical system in \eqref{defKM}, which selects appropriate choices of maps by means of a symbol, we can consider the dynamics of all possible iterates. In the spirit of~\cite{Hutch81}, we may define the iterates of $\mathcal{K}$ by the Hutchinson operator,
\begin{equation}\label{KasnerHutchinson}
\mathcal{K}^n(p):=
\bigcup_{\substack{\mu_k=1,2 ; \hspace{0.1cm} \nu_k=2,3 ;
\hspace{0.1cm} \zeta_k=1,3\\ \text{for }  k=1,...,n}}
{\cal K}_{\mu_n\nu_n\zeta_n}\circ ...\circ {\cal K}_{\mu_1\nu_1\zeta_1}(p).
\end{equation}
In general, the Hutchinson operator for uniform contractions guarantees the existence of an attractor, i.e., %the unique fixed point of the Hutchinson operator which yields 
a nonempty compact set $\mathbb{A}\subseteq \mathrm{K}^\ocircle$ such that $\lim_{n\to \infty} \mathcal{K}^n(C)=\mathbb{A}$, with respect to the Hausdorff metric for every nonempty compact set $C\subseteq \mathrm{K}^\ocircle$, see \cite{Hutch81}. 
In case the maps are continuous, but the contractions are not uniform or weakly hyperbolic, there is still an attractor $\mathbb{A}$ when the diameter of iterates of the phase-space converge to zero for some sequence of symbols, see \cite{MatDiaz,ArJuSa17}.
Note that in the usual physical time direction (i.e., the reverse of the present time direction), the Kasner map becomes a non-uniform contraction. However, the discontinuities of the Kasner maps still pose problems in guaranteeing the existence of an attractor and describing its internal dynamics. Thus, a finer and more robust dynamical description remains an open problem. We conjecture that a dense attractor (with chaotic dynamics) exists for the multi-valued Kasner map, which is corroborated by numerical experiments.
%Some works present sufficient conditions for an IFS to have an attractor, besides the classical result of \cite{Hutch81} for contractions, such as \cite{ArJuSa17} for IFSs formed by continuous functions and \cite{MatDiaz} in the case of weakly hyperbolic systems. The maps in the IFS representation of the multi-valued Kasner map are not continuous but are piecewise continuous, which makes it interesting to look for new results regarding attractors of such kind of systems.
In particular, the relation of a conjectured attractor $\mathbb{A}$ with the dense invariant set $\tilde{F}$ is an open issue.

Lastly, we compare our result with the existing physical literature on chaotic aspects of spatially homogeneous Ho\v{r}ava-Lifshitz models. 

In \cite{GiKam17}, the authors consider a fixed kinetic parameter $\lambda=1$ with two types of dominant potential terms: a quadratic and a cubic one (resp. $k_2 R^2+k_3 R^i\!_jR^j\!_i$ and $k_6 R^3+\ldots$ in \eqref{calV}). 
For quadratic potentials, they concluded that the dynamical regime towards the singularity is chaotic.
For cubic potentials, they concluded that the dynamical regime towards the singularity is oscillating, but not chaotic. %In particular, they have constructed a heteroclinic chain of period 3.
In our setting, these quadratic and cubic potentials with $\lambda=1$ correspond to the parameter values $v=1/8$ and $v=0$, respectively, according to \eqref{v's}.
Our results are in agreement with these conclusions.
Indeed, Theorem \ref{mainthm} is valid for $v=1/8$, whereas our proof fails for $v=0$. See \cite{LappicyLessard} for a more detailed discussion in the case that $v=0$, where chaos of the Kasner map is suppressed, but the Bianchi type IX has continuous periodic orbits (which are not heteroclinic chains).

In \cite{Bakas10}, the authors consider a perturbed kinetic parameter $\lambda\neq 1$ with quadratic dominant curvature terms which corresponds to $k_2 R^2 + k_3 R^i\!_jR^j\!_i + k_4 R^i\!_jC^j\!_i + k_5 C^i\!_jC^j\!_i $ in \eqref{calV}. 
%They do not write down the system of ODEs but qualitative describe the dynamics.
For large anisotropy, they argue that the quadratic Cotton potential, $k_5 C^i\!_jC^j\!_i$, is dominant towards the singularity. % since it has the steepest walls
There are similarities and differences from GR: solutions will keep bouncing from one Kasner state to another (with a different bounce law from GR), but the dynamics is not chaotic.
%On one hand, they conclude that the qualitative picture remains similar to GR: solutions will keep bouncing from one Kasner state to another, even though the bounce law is different than GR. On the other hand, they conclude that the dynamics is not chaotic, which is different than GR.
%for intermediate anisotropy they can not conclude if the dynamics is chaotic or not.
%for small anisotropy, they conclude that solutions revolve around a harmonic oscillator and chaos is supressed. the evolution is not kasner-like
In our setting, such a dominant quadratic Cotton potential corresponds to the case $v=0$, due to \eqref{v's}, and thus their conclusions are in agreement with our results.

Hence our framework proposes a unified approach to analyze different dominant potential terms. Not only it comprises the existing findings in the literature for specific dominant potentials, but also describes the generic dynamics for a broad range of potentials which is given by a unified parameter $v\in (0,1)$.

\appendix

\section{Derivation of the ODE model}\label{app}
We deduce the evolution equations \eqref{full:subs} from the action \eqref{action} following \cite[Appendix A]{HellLappicyUggla}.
For the vacuum HL class~A Bianchi models, the action~\eqref{action}
expressed in terms of a symmetry adapted spatial (left-invariant) co-frame $\{{\omega}^1,{\omega}^2,{\omega}^3\}$  yields the field equations for the associated metric~\eqref{genmetric}.
Expressing the components of the spatial metric in such a symmetry adapted
spatial co-frame leads to that they become purely time-dependent in diagonal form, see~\cite{waiell97} and references therein. 
Setting the shift vector $N_i$ in~\eqref{genmetric} to zero, %the metric \eqref{genmetric} can be rewritten as
the diagonalized vacuum spatially homogeneous class~A metrics are given by
\begin{equation}\label{threemetric}
%{}^4
\mathbf{g} = -N^2(t)dt\otimes
dt + g_{11}(t)\:{\omega}^1\otimes {\omega}^1 +
g_{22}(t)\:{\omega}^2\otimes {\omega}^2 +
g_{33}(t)\:{\omega}^3\otimes {\omega}^3,
\end{equation}
where the lapse $N(t)$ is a non-zero function determining
the particular choice of time variable.

In order to obtain simple Hamiltonian equations, we first focus on the kinetic
part ${\cal T}$ in equation~\eqref{kin}, which can be written as
\begin{equation}\label{kin2}
{\cal T} = (K^1\!_1)^2 + (K^2\!_2)^2 + (K^3\!_3)^2 - \lambda (K^1\!_1 + K^2\!_2 + K^3\!_3)^2,
\end{equation}
where the extrinsic curvature is given by
$(K_{11}, K_{22}, K_{33}) = (\dot{g}_{11}, \dot{g}_{22}, \dot{g}_{33})/(2N)$ such that $\dot{}$ denotes a derivative with respect to $t$, and thus raising one of the indices, we obtain that $(K^1\!_1, K^2\!_2, K^3\!_3) %= \frac{1}{2N}(g^{11}\dot{g}_{11}, g^{22}\dot{g}_{22}, g^{33}\dot{g}_{33})
= \left(\dot{g}_{11}/g_{11}, \dot{g}_{22}/g_{22}, \dot{g}_{33}/g_{33}\right)/(2N)$.
%
%It follows that ${\cal T}$ is a quadratic form in the time derivatives of the metric. 

To simplify ${\cal T}$, we make a variable transformation
from the metric components to the variables $\beta^0,\beta^+,\beta^-$,
first introduced by Misner~\cite{mis69a,mis69b,grav73},
\begin{equation}\label{Misnerbeta}
g_{11} = e^{2( \beta^0 - 2\beta^+)}, \qquad
g_{22} = e^{2(\beta^0 + \beta^+ + \sqrt{3}\beta^-)},\qquad
g_{33} = e^{2(\beta^0 + \beta^+ - \sqrt{3}\beta^-)}.
\end{equation}
This results in that ${\cal T}$ in equation~\eqref{kin2} takes the form
\begin{equation}\label{calT}
{\cal T} = \dfrac{6}{N^2}
\left[-\left(\frac{3\lambda-1}{2}\right)(\dot{\beta}^0)^2 + (\dot{\beta}^+)^2 + (\dot{\beta}^-)^2\right].
\end{equation}
Note that the character of the quadratic form~\eqref{calT} changes
when $\lambda = 1/3$. Since we are interested
in continuously deforming the GR case $\lambda = 1$, we restrict considerations to
$\lambda > 1/3$. To simplify the kinetic part further, we introduce a new variable
$\beta^\lambda$ and a density-normalized lapse function $\mathcal{N}$, defined by
\begin{equation}\label{betaLAMBDAandNcal}
\beta^\lambda := \sqrt{\frac{3\lambda-1}{2}}\beta^0,\qquad
{\cal N} := \frac{N}{12\sqrt{g}},
\end{equation}
where $g = g_{11}g_{22}g_{33} = \exp(6\beta^0)$ is the determinant of the spatial metric
in the symmetry adapted co-frame, which leads to,
\begin{equation}
\sqrt{g}N{\cal T} = \dfrac{1}{2 {\cal N}}\left[-(\dot{\beta}^\lambda)^2 + (\dot{\beta}^+)^2 + (\dot{\beta}^-)^2\right].
\end{equation}

It is convenient to define $T := \frac{\sqrt{g} N}{\mathcal{N}}\mathcal{T} = 12g{\cal T}$, so that ${\cal N}T$ is the kinetic part of the Lagrangian for
the present spatially homogeneous models, in
analogy with the GR case, see e.g., ch. 10 in~\cite{waiell97}.
The density-normalized lapse ${\cal N}$ is kept in the kinetic term
${\cal N}T$, since it is needed in order to obtain the Hamiltonian
constraint, which is accomplished by varying ${\cal N}$ in the Hamiltonian.

To proceed to a Hamiltonian description, we introduce the canonical momenta
\begin{equation}
p_\lambda := -\frac{\dot{\beta}^\lambda}{{\cal N}}, \qquad p_\pm :=\frac{\dot{{\beta}}^\pm}{{\cal N}}.
\end{equation}
This leads to that $T$ takes the form
\begin{equation}\label{Tkinetics}
T = \frac12\left(- p_\lambda^2 + p_+^2 + p_-^2\right).
\end{equation}

Similarly to the treatment of the kinetic part, we define
\begin{equation}
V := \sqrt{g}N {\cal V}/{\cal N}=12 g {\cal V}.
\end{equation}
Due to~\eqref{calV},
\begin{equation}\label{VHL}
V = {}^1V + {}^2V + {}^3V + {}^4V + {}^5V + {}^6V + \dots ,
\end{equation}
where
\begin{subequations}\label{pots}
\begin{alignat}{3}
{}^1V &:= 12k_1 gR, &\qquad {}^2V &:= 12k_2 gR^2,  &\qquad
{}^3V &:= 12k_3 g R^i\!_jR^j\!_i,\\
{}^4V &:= 12k_4 g R^i\!_jC^j\!_i,
&\qquad {}^5V &:= 12k_5 g C^i\!_jC^j\!_i, &\qquad {}^6V &:= 12k_6 gR^3.
\end{alignat}
\end{subequations}
The superscripts on ${}^AV$ (where $A = 1,\dots,6$) thereby coincide with the
subscripts of the constants $k_A$ in~\eqref{calV}.
%that multiply the spatial
%curvature expressions determined by the spatial scalar curvature $R$, the Ricci
%curvature $R^i\!_j$ and the spatial Cotton (Cotton-York) tensor $C^i\!_j$. Thus,

Based on~\eqref{action}, this leads to a Hamiltonian $H$ given by
\begin{equation}\label{LambdaRham}
H := \sqrt{g}N({\cal T} + {\cal V}) = {\cal N}(T + V) = 0,
\end{equation}
where $T$ only depends on the canonical momenta $p_\lambda$, $p_\pm$,
given by~\eqref{Tkinetics}, and $V$ only depends on
$\beta^\lambda$, $\beta^\pm$, given by~\eqref{VHL} and~\eqref{pots}.

In order to derive the ordinary differential equations for these models
via the Hamiltonian equations in terms of the variables
$\beta^\lambda$, $\beta^\pm$ and the canonical momenta $p_\lambda$, $p_\pm$,
we need to compute each ${}^AV(\beta^\lambda,\beta^\pm)$.
We proceed with the simplest case that minimally modifies vacuum GR in the present context, the vacuum $\lambda$-$R$ models~\cite{giukie94,belres12,lolpir14}.
They are obtained from an action that consists of the generalized
kinetic part in~\eqref{kin}, i.e, by keeping $\lambda$ (GR is obtained by
setting $\lambda=1$), and the vacuum GR potential in~\eqref{calV},
i.e., a potential arising from $-R$ only, and hence when $k_1=-1$
and $k_2=k_3=k_4=k_5=k_6=0$ in~\eqref{calV}. These models suffice for
our goal of deriving the ODEs \eqref{full:subs}.
The case that modifies GR with more general potentials, the HL models are similar and can be found in~\cite[Appendix A.2]{HellLappicyUggla}. In particular, they heuristically argue that a broad class of HL models possess a dominant potential with asymptotic dynamics described by the $\lambda$-$R$ models.

%In order to obtain the potential in~\eqref{calV} for the vacuum $\lambda$-$R$ models
%we need to express the scalar curvature $R$ in $\beta^\lambda$ and $\beta^\pm$.
To obtain succinct expressions for the spatial curvature, and thereby the potential $V={}^1V = - 12gR$,
we introduce the following auxiliary quantities %(see~\cite{heiugg09a} for a discussion when one, or several, of the constants $n_1$, $n_2$, $n_3$ is zero),
\begin{subequations}\label{malpha}
\begin{align}
m_1 &:= n_1g_{11} = n_1 e^{2(2v\beta^\lambda - 2\beta^+)},\\
m_2 &:= n_2g_{22} = n_2 e^{2(2v\beta^\lambda + \beta^+ + \sqrt{3}\beta^-)},\\
m_3 &:= n_3g_{33} = n_3 e^{2(2v\beta^\lambda + \beta^+ - \sqrt{3}\beta^-)}.
\end{align}
\end{subequations}
Here we have introduced the parameter $v$, which is defined by
the relation
\begin{equation}\label{app:v}
v := \frac{1}{\sqrt{2(3\lambda - 1)}},
\end{equation}
and hence $\beta^0 = 2v\beta^\lambda$ due to~\eqref{betaLAMBDAandNcal}.
The parameter $v$ plays a prominent role in the evolution equations. Since we are
interested in continuous deformations of GR with $\lambda=1$, and thus
$v=1/2$, we restrict attention to $v\in (0,1)$.
Specializing the general expression for the spatial curvature in~\cite{elsugg97}
to the diagonal class A Bianchi models leads to %\footnote{Note that the expressions for the extrinsic and the spatial curvature with one upper and one lower index coincide when using either the presently introduced spatial co-frame or an associated orthonormal frame for the diagonal class~A models, as in~\cite{elsugg97}.}
\begin{equation}\label{R11}
R^1\!_1 = \frac{1}{2g}(m_1^2 - (m_2-m_3)^2),
\end{equation}
where $R^1\!_1 = g^{11}R_{11}
= g_{11}^{-1}R_{11}$, and similarly by permutations
for $R^2\!_2$ and $R^3\!_3$. It follows that the spatial scalar curvature
$R = R^1\!_1 + R^2\!_2 +R^3\!_3$ is given by
\begin{equation}\label{Rscalar}
R = -\frac{1}{2g}(m_1^2 + m_2^2 + m_3^2 - 2m_1m_2 - 2m_2m_3 - 2m_3m_1).
\end{equation}
This thereby yields the potential in~\eqref{VHL} and~\eqref{pots} with $k_1=-1$:
\begin{equation}\label{kinV}
V = {}^1V = -12 g R = 6(m_1^2 + m_2^2 + m_3^2 - 2m_1m_2 - 2m_2m_3 - 2m_3m_1),
\end{equation}
where $V$ depends on $\beta^\lambda$ and $\beta^\pm$ via $m_1$, $m_2$ and $m_3$,
according to equation~\eqref{malpha}.

The evolution equations for $\beta^\lambda$, $\beta^\pm$, $p_\lambda$, $p_\pm$
are obtained from Hamilton's equations, where $T$ and $V$
in the Hamiltonian~\eqref{LambdaRham} are given by~\eqref{Tkinetics} and~\eqref{kinV},
respectively, which yields
\begin{subequations}\label{HamiltonEQ}
\begin{align}
\dot{\beta}^\lambda &= \frac{\partial H}{\partial p_\lambda}
= -{\cal N} p_\lambda, \qquad && \dot{p}_\lambda
= - \frac{\partial H}{\partial\beta^\lambda}
= - \mathcal{N} \frac{\partial V}{\partial\beta^\lambda}, \label{betadotham}\\
\dot{\beta}^\pm &= \frac{\partial H}{\partial{p}_\pm}
= {\cal N}{p}_\pm, \qquad &&\dot{p}_\pm
= - \frac{\partial H}{\partial{\beta}^\pm}
= - \mathcal{N} \frac{\partial V}{\partial{\beta}^\pm},
\end{align}
\end{subequations}
while the Hamiltonian constraint $T+V=0$ is obtained by varying ${\cal N}$.

Next, we choose a new time variable
$\tau_-:=-\beta^\lambda$, %= -\beta^0/2v$
which is directed toward the physical past, since we are considering
expanding models. This is accomplished by setting ${\cal N} =  p_\lambda^{-1}$
in the first equation in~\eqref{betadotham}, and thereby $N = 12\sqrt{g}/ p_\lambda$,
which results in the following evolution equations:
\begin{subequations}\label{HamiltonEQ2}
\begin{align}
\frac{d\beta^\lambda}{d\tau_-} &= -1, \qquad
&&\frac{d p_\lambda}{d\tau_-} =
-\frac{1}{ p_\lambda} \frac{\partial V}{\partial\beta^\lambda}, \\
\frac{d\beta^\pm}{d\tau_-} &= \frac{{p}_\pm}{ p_\lambda}, \qquad &&
\frac{dp_\pm}{d\tau_-} = -\frac{1}{ p_\lambda} \frac{\partial V}{\partial{\beta}^\pm}.
\end{align}
\end{subequations}

We then rewrite the system~\eqref{HamiltonEQ2} and the constraint $T+V=0$
using the non-canonical variable transformation,
\begin{equation}\label{SigmaNvariables}
\Sigma_\pm := - \frac{p_\pm}{ p_\lambda}, \qquad \qquad \qquad N_\alpha
:= - 2\sqrt{3}\left(\frac{m_\alpha}{ p_\lambda}\right),
\end{equation}
while keeping $p_\lambda$. Note that $\Sigma_\pm = d\beta^\pm/d\beta^\lambda = -d\beta^\pm/d\tau_-$.

These variables lead to a decoupling %\footnote{More precisely, the variables result in a skew-product dynamical system where the base dynamics acts in $(\Sigma_\pm,N_1,N_2,N_3)$ while the fiber dynamics acts in $p_\lambda$. This notion was introduced in connection with ergodic theory in~\cite{Anzai51}.} 
of the evolution equation for the variable $ p_\lambda$,
\begin{equation}\label{p0prime}
p_\lambda^\prime = -4v(1-\Sigma^2) p_\lambda,
\end{equation}
where ${}^\prime$ denotes the derivative $d/d\tau_-$.
This yields the following reduced system of evolution equations
\begin{subequations}\label{dynsyslambdaR}
\begin{align}
\Sigma_\pm^\prime &= 4v(1-\Sigma^2)\Sigma_\pm + {\cal S}_\pm,\\
N_1^\prime &= -2(2v\Sigma^2 - 2\Sigma_+)N_1,\\
N_2^\prime &= -2(2v\Sigma^2 + \Sigma_+ + \sqrt{3}\Sigma_-)N_2,\\
N_3^\prime &= -2(2v\Sigma^2 + \Sigma_+ - \sqrt{3}\Sigma_-)N_3,
\end{align}
while the Hamiltonian constraint $T+V=0$ results in
\begin{equation}\label{constrpmVIIIIX}
1 - \Sigma^2 - \Omega_k =0,
\end{equation}
\end{subequations}
where
\begin{subequations}\label{LambdaRquantities}
\begin{align}
\Sigma^2 &:= \Sigma_+^2 + \Sigma_-^2,\\
\Omega_k &:= N_1^2 + N_2^2 + N_3^2 - 2N_1N_2 - 2N_2N_3 - 2N_3N_1,\\
{\cal S}_+ &:= 2[(N_2 - N_3)^2 - N_1(2N_1 - N_2 - N_3)],\\
{\cal S}_- &:= 2\sqrt{3}(N_2 - N_3)(N_2 + N_3 - N_1).
\end{align}
\end{subequations}

Note that the variables $\Sigma_\pm$, $N_1$, $N_2$ and $N_3$,
defined in~\eqref{SigmaNvariables}, are \emph{dimensionless}. Dimensions
can be introduced in various ways, but terms in a sum
must all have the same dimension. The constraint~\eqref{constrpmVIIIIX}
is such a sum. Since this sum contains 1,
%that involves quadratic terms of the variables $\Sigma_+$, $\Sigma_-$, $N_1$, $N_2$, $N_3$
which %was obtained by dividing $p_\lambda^2$ with $p_\lambda^2$, which
obviously is dimensionless, it follows that $\Sigma_+$, $\Sigma_-$,
$N_1$, $N_2$ and $N_3$ are dimensionless, and so is the time variable
$\tau_-$, as follows from inspection of~\eqref{dynsyslambdaR}.
The vacuum GR equations are obtained by setting $v=1/2$.

In~\cite[Appendix A.2]{HellLappicyUggla}, it is heuristically argued that a broad range of HL models have asymptotic dynamics described by the $\lambda$-$R$ evolution equations \eqref{dynsyslambdaR}. To achieve this, they use Misner's approximation scheme of a `particle' moving in a potential well in $(\beta^+,\beta^-)\in\mathbb{R}^2$ space as $\tau_- = - \beta^\lambda \rightarrow \infty$, which was introduced to understand the initial Bianchi type IX singularity in GR, see~\cite{mis69a,mis69b,waiell97,jan01}.
For HL, each potential term in \eqref{calV} has its associated `moving walls' that move with velocity ${}^iv$. Among those, there is a dominant potential term which yields the same evolution equations as the $\lambda$-$R$ models in \eqref{dynsyslambdaR}, but with different parameters ${}^iv$ given by \eqref{v's} instead of the parameter $v$ in \eqref{app:v}.
}

\textbf{Acknowledgments.} We are grateful for the nonlinear dynamics group at Freie Universität Berlin for insightful discussions/suggestions; in particular, B. Fiedler and H. Sprink. PL was funded by FAPESP, 17/07882-0 and 18/18703-1. VHD was funded by FAPESP, 20/07532-1.

\bibliographystyle{plain}

\end{document}